\begin{document}

\newcommand{\eps}{\varepsilon}
\newcommand{\var}{\operatorname{Var}}
\newcommand{\expn}{\mathbb{E}}
\newcommand{\pr}{\mathbb{P}}
\newcommand{\pmark}{\mathcal{P}_{\text{mark}}}
\renewcommand{\vec}{\bm}

\newcommand{\thint}{\vec{\theta}_{\text{int}}}
\newcommand{\thall}{\vec{\theta}_{\text{all1}}}
\newcommand{\thmaj}{\vec{\theta}_{\text{maj}}}
\newcommand{\thmin}{\vec{\theta}_{\text{min}}}
\newcommand{\thcoi}{\vec{\theta}_{\text{coin}}}
\newcommand{\thadd}{\vec{\theta}_{\text{add}}}
\newcommand{\thdil}{\vec{\theta}_{\text{dil}}}
\newcommand{\ththr}{\vec{\theta}_{\text{thr}}}
\newcommand{\ththrr}{\vec{\theta}_{\text{thrr}}}

\newcommand{\tos}{^s}
\newcommand{\caps}{C\tos}
\newcommand{\pints}{p_{\text{int}}\tos}
\newcommand{\palls}{p_{\text{all1}}\tos}
\newcommand{\pmajs}{p_{\text{maj}}\tos}
\newcommand{\pmins}{p_{\text{min}}\tos}
\newcommand{\pcois}{p_{\text{coin}}\tos}
\newcommand{\padds}{p_{\text{add}}\tos}
\newcommand{\pdils}{p_{\text{dil}}\tos}
\newcommand{\pthrs}{p_{\text{thr}}\tos}
\newcommand{\pthrrs}{p_{\text{thrr}}\tos}

\newcommand{\toj}{^j}
\newcommand{\capj}{C\toj}
\newcommand{\pintj}{p_{\text{int}}\toj}
\newcommand{\pallj}{p_{\text{all1}}\toj}
\newcommand{\pmajj}{p_{\text{maj}}\toj}
\newcommand{\pminj}{p_{\text{min}}\toj}
\newcommand{\pcoij}{p_{\text{coin}}\toj}
\newcommand{\paddj}{p_{\text{add}}\toj}
\newcommand{\pdilj}{p_{\text{dil}}\toj}
\newcommand{\pthrj}{p_{\text{thr}}\toj}
\newcommand{\pthrrj}{p_{\text{thrr}}\toj}

\newtheorem{theorem}{Theorem}
\newtheorem{lemma}{Lemma}
\newtheorem{proposition}{Proposition}
\newtheorem{corollary}{Corollary}
\newtheorem{conjecture}{Conjecture}

\title{Capacities and Capacity-Achieving Decoders\\ for Various Fingerprinting Games}
\author{Thijs Laarhoven\footnote{T. Laarhoven is with the Department of Mathematics and Computer Science, Eindhoven University of Technology, P.O. Box 513, 5600 MB Eindhoven, The Netherlands. \protect\\
E-mail: mail@thijs.com.}}
\date{\today}

\maketitle
\begin{abstract}
Combining an information-theoretic approach to fingerprinting with a more constructive, statistical approach, we derive new results on the fingerprinting capacities for various informed settings, as well as new log-likelihood decoders with provable code lengths that asymptotically match these capacities. The simple decoder built against the interleaving attack is further shown to achieve the simple capacity for unknown attacks, and is argued to be an improved version of the recently proposed decoder of Oosterwijk et al. With this new universal decoder, cut-offs on the bias distribution function can finally be dismissed.

Besides the application of these results to fingerprinting, a direct consequence of our results to group testing is that (i) a simple decoder asymptotically requires a factor $1.44$ more tests to find defectives than a joint decoder, and (ii) the simple decoder presented in this paper provably achieves this bound.
\end{abstract}

%WWWWWWWWWWWWWWWWWWWWWWWWWWWWWWWWWWWWWWWWWWWWWWWWWWWWWWWW
%WWWWWWWWWWWWWWWWWWWWWWWWWWWWWWWWWWWWWWWWWWWWWWWWWWWWWWWW

\section{Introduction}

To protect copyrighted content against unauthorized redistribution, distributors may embed watermarks or fingerprints in the content, uniquely linking copies to individual users. Then, if an illegal copy of the content is found, the distributor can extract the watermark from the copy and compare it to the database of watermarks, to determine which user was responsible. 

To combat this solution, pirates may try to form a coalition of several colluders, each owning a differently watermarked copy of the content, and perform a collusion attack. By comparing their different versions of the content, they will detect differences in their copies which must be part of the watermark. They can then create a new pirate copy, where the resulting watermark matches the watermark of different pirates in different segments of the content, making it hard for the distributor to find the responsible users. Fortunately, under the assumption that if the pirates don't detect any differences (because they all received the same version) they output this watermark (known in the literature as the Boneh-Shaw marking assumption~\cite{boneh98}), it is still possible to find all colluders using suitable fingerprinting codes.

%WWWWWWWWWWWWWWWWWWWWWWWWWWWWWWWWWWWWWWWWWWWWWWWWWWWWWWWW

\subsection{Model}

The above fingerprinting game is often modeled as the following two-person game between the distributor $\mathcal{D}$ and the coalition of pirates $\mathcal{C}$. The set of colluders is assumed to be a random subset of size $|\mathcal{C}| = c$ from the complete set of $n$ users $\mathcal{U}$, and the identities of these colluders are unknown to the distributor. The aim of the game for the distributor is ultimately to discover the identities of the colluders, while the colluders want to stay hidden. The game consists of the following three phases: (i) the distributor uses an \textit{encoder} to generate the fingerprints; (ii) the colluders employ a \textit{collusion channel} to generate the pirate output, and (iii) the distributor uses a \textit{decoder} to map the pirate output to a set of accused users.

\paragraph{Encoder}
First, the distributor generates a fingerprinting code $\mathcal{X}$, consisting of $n$ code words $\vec{X}_1, \dots, \vec{X}_n$ from $\{0,1\}^{\ell}$.\footnote{More generally $\mathcal{X}$ is a code with code words of length $\ell$ from an alphabet $\mathcal{Q}$ of size $q \geq 2$, but in this paper we restrict our attention to the binary case $q = 2$.} The $i$th entry of code word $j$ indicates which version of the content is assigned to user $j$ in the $i$th segment. The parameter $\ell$ is referred to as the code length, and the distributor would like $\ell$ to be as small as possible.

A common restriction on the encoding process is to assume that $\mathcal{X}$ is created by first generating a probability vector $\vec{P} \in [0,1]^{\ell}$ by choosing each entry $P_i$ independently from a certain distribution function $F$, and then generating $\mathcal{X}$ according to $\pr(X_{j,i} = 1) = P_i$. This guarantees that watermarks of different users $j$ are independent, and that watermarks in different positions $i$ are independent. Schemes that satisfy this assumption are sometimes called \textit{bias-based schemes}, and the encoders discussed in this paper also belong to this category.

\paragraph{Collusion channel}
After generating $\mathcal{X}$, the entries are used to select and embed watermarks in the content, and the content is sent out to all users. The colluders then get together, compare their copies, and use a certain collusion channel or pirate attack $\vec{\theta}$ to select the pirate output $\vec{Y} \in \{0,1\}^{\ell}$. If the pirate attack behaves symmetrically both in the colluders and in the positions $i$, then the collusion channel can be modeled by a vector $\vec{\theta} \in [0,1]^{c+1}$, consisting of entries $\theta_z = f_{Y|Z}(1|z)$ indicating the probability of outputting a $1$ when the pirates received $z$ ones and $c - z$ zeroes. Some common attacks $\vec{\theta}$ are described in Section~\ref{sec:attacks}.

\paragraph{Decoder}
Finally, after the pirate output has been generated and distributed, we assume that the distributor intercepts it and applies a decoding algorithm to the pirate output $\vec{Y}$, the code $\mathcal{X}$ and the (secret) bias vector $\vec{P}$ to compute a set $\mathcal{C}' \subseteq \mathcal{U}$ of accused users. This is commonly done by assigning \textit{scores} to users, and accusing those users whose score exceeds some predefined threshold $\eta$. The distributor wins the game if $\mathcal{C}'$ is non-empty and contains only colluders (i.e.\ $\emptyset \neq \mathcal{C}' \subseteq \mathcal{C}$) and loses if this is not the case, which could be because an innocent user $j \notin \mathcal{C}$ is falsely accused (a false positive error), or because no guilty users are accused (a false negative error). We often write $\eps_1$ and $\eps_2$ for upper bounds on the false positive and false negative probabilities respectively.

%WWWWWWWWWWWWWWWWWWWWWWWWWWWWWWWWWWWWWWWWWWWWWWWWWWWWWWWW

\subsection{Related work}

Work on the above bias-based fingerprinting game started in 2003, when Tardos proved that any fingerprinting scheme must satisfy $\ell \propto c^2 \ln n$, and that a bias-based scheme is able to achieve this optimal scaling in $\ell$~\cite{tardos03}. He proved the latter by providing a simple and explicit construction with a code length of $\ell = 100 c^2 \ln(n/\eps_1)$, which is known in the literature as the Tardos scheme.

\paragraph{Improved constructions}
Later work on the constructive side of fingerprinting focused on improving upon Tardos' result by sharpening the bounds~\cite{blayer08, skoric08b}, optimizing the distribution functions~\cite{nuida07}, improving the score function~\cite{skoric08}, tightening the bounds again with this improved score function~\cite{laarhoven13ihmmsec, laarhoven14dcc, nuida09, simone12, skoric08, skoric13}, optimizing the score function~\cite{oosterwijk13}, and again tightening the bounds with this optimized score function~\cite{ibrahimi13, oosterwijk13b} to finally end up with a sufficient asymptotic code length of $\ell \sim 2 c^2 \ln n$ for large $n$. This construction can be extended to larger alphabets, in which case the code length scales as $\ell \sim 2 c^2 \ln(n)/(q - 1)$. Other work on practical constructions focused on joint decoders, which are computationally more involved but may work with shorter codes~\cite{meerwald12, moulin08, oosterwijk14}, and side-informed fingerprinting games~\cite{charpentier09, furon09b, laarhoven13wifs, oosterwijk13}, where estimating the collusion channel $\vec{\theta}$ was considered to get an improved performance. 

Recently Abbe and Zheng~\cite{abbe10} showed that, in the context of fingerprinting~\cite{meerwald12}, if the set of allowed collusion channels satisfies a certain one-sidedness condition, then a decoder that achieves capacity against the information-theoretic worst-case attack is a universal decoder achieving capacity against arbitrary attacks. The main drawback of using this result is that the worst-case attack is hard to compute, but this does lead to more insight why e.g.\ Oosterwijk et al.~\cite{oosterwijk13b} obtained a universal decoder by considering the decoder against the `interleaving attack', which is known to be the asymptotic worst-case attack. 

\paragraph{Fingerprinting capacities}
At the same time, work was also done on establishing bounds on the fingerprinting capacity $C$, which translate to lower bounds on the required asymptotic code length $\ell$ through $\ell \gtrsim C^{-1} \log_2 n$ for large $n$. For the binary case Huang and Moulin~\cite{huang09, huang09b, huang10, huang12, moulin08} and Amiri and Tardos~\cite{amiri09} independently derived exact asymptotics for the fingerprinting capacity for arbitrary attacks as $C \sim (2 c^2 \ln 2)^{-1}$, corresponding to a minimum code length of $\ell \sim 2 c^2 \ln n$. Huang and Moulin~\cite{huang12} further showed that to achieve this bound, an encoder should use the arcsine distribution $F^*$ for generating biases $p$: 
\begin{align}
F^*(p) = \frac{2}{\pi} \arcsin \sqrt{p}. \qquad \quad (0 < p < 1)
\end{align}
These capacity-results were later generalized to the $q$-ary setting~\cite{boesten11, huang12b} showing that a $q$-ary code length of $\ell \sim 2 c^2 \ln(n)/(q-1)$ is asymptotically optimal.

\paragraph{Dynamic fingerprinting}
There has also been some interest in a variant of the above fingerprinting game where several rounds of the two-player game between the distributor and the coalition are played sequentially. This allows the distributor to adjust the encoding and decoding steps of the next rounds to the knowledge obtained from previous rounds. Many of the bias-based constructions can also be used effectively in this dynamic setting~\cite{laarhoven12wifs, laarhoven13tit, laarhoven13wifs} with equivalent asymptotics for the required code length, but allowing the distributor to trace all colluders even if the collusion channel is not symmetric in the colluders, and leading to significantly smaller first order terms than in the `static' setting. These bias-based dynamic schemes may even be able to compete with the celebrated scheme of Fiat and Tassa~\cite{fiat01}.

\paragraph{Group testing}
Finally, a different area of research closely related to fingerprinting is that of group testing, where the set of $n$ users corresponds to a set of $n$ items, the set of $c$ colluders corresponds to a subset of $c$ defective items, and where the aim of the distributor is to find all defective items by performing group tests. This game corresponds to a special case of the fingerprinting game, where the pirate attack is fixed in advance (and possibly known to the distributor) to (a variant of) the `all-$1$ attack'. In this game it is significantly easier to find all pirates/defectives; it is known that a joint decoder asymptotically requires only $\ell \sim c \log_2 n$ tests~\cite{sebo85}, while simple decoders exist requiring as few as $\ell \sim e c \ln n$ tests to find all defectives~\cite{chan12}. Recent work has shown that applying results from fingerprinting to group testing may lead to improved results compared to what is known in the group testing literature~\cite{laarhoven13allerton, meerwald11b}.

%WWWWWWWWWWWWWWWWWWWWWWWWWWWWWWWWWWWWWWWWWWWWWWWWWWWWWWWW

\subsection{Contributions}

\sloppypar{In this work we first extend the work of Huang and Moulin~\cite{huang12} by deriving explicit asymptotics for the simple and joint capacities of various fingerprinting games with different amounts of side-information. Table~\ref{tab:1} summarizes tight lower bounds on the code length constant for various informed settings obtained via the capacities. These asymptotics can be seen as our `targets' for the second part of this paper, which describes decoders with provable bounds on $\ell$ and $\eta$ that asymptotically achieve these capacities. In fact, if the collusion channel that the decoder was built against matches the attack used by the pirates, then the proof that the resulting simple decoders achieve capacity is remarkably simple and holds for arbitrary attacks.}

\begin{table}[!t]
\renewcommand{\arraystretch}{1.3}
\caption{Asymptotics for tight lower bounds on $L = \ell/\ln n \cong C^{-1}/\ln 2$, based on the simple and joint capacities, with different amounts of side information (see Section~\ref{sec:sideinformation}). The proposed simple decoders are shown to match these bounds, and we conjecture that the proposed joint decoders are also asymptotically optimal.\label{tab:1}}
\centering
\begin{tabular}{p{2.5cm}cccc} \hline
 & \multicolumn{2}{c}{Fully informed} & \multicolumn{2}{c}{Partially informed} \\ 
 & Simple & Joint & Simple & Joint \\ \hline
Interleaving atk. & $2c^2$ & $2c^2$ & $2c^2$ & $2c^2$ \\ 
All-$1$ attack & $2.08 c$ & $1.44 c$ & $1.83 c\sqrt{c}$ & $1.32 c\sqrt{c}$ \\
Majority voting & $3.14 c$ & $1.44 c$ & $2.41 c\sqrt{c}$ & $1.20 c\sqrt{c}$ \\ 
Minority voting & $2.08 c$ & $1.44 c$ & $0.66 c\sqrt{c}$ & $0.43 c\sqrt{c}$ \\ 
Coin-flip attack & $8.33 c$ & $4.48 c$ & $5.18 c\sqrt{c}$ & $2.32 c\sqrt{c}$ \\ \hline
\end{tabular}
\end{table}

\paragraph{Capacity-achieving simple decoding without cut-offs}
Similar to Oosterwijk et al.~\cite{oosterwijk13, oosterwijk13b}, who studied the decoder built against the interleaving attack because that attack is in a sense optimal, we then turn our attention to the simple decoder designed against the interleaving attack, and argue that it is an improved version of Oosterwijk et al.'s universal decoder. To provide a sneak preview of this result, the new score function is the following:
\begin{align}
g(x,y,p) = \begin{cases} 
\ln\left(1 + \frac{p}{c(1 - p)}\right) & x = y = 0 \\ 
\ln\left(1 - \frac{1}{c}\right) & x \neq y \\ 
\ln\left(1 + \frac{1 - p}{c p}\right) & x = y = 1 
\end{cases}
\end{align}
This decoder is shown to achieve the uninformed simple capacity, and we argue that with this decoder (i) the Gaussian assumption always holds (and convergence to the normal distribution is much faster), and (ii) no cut-offs on the bias distribution function $F$ are ever needed anymore.

\paragraph{Joint log-likelihood decoders}
Since it is not hard to extend the definition of the simple decoder to joint decoding, we also present and analyze joint log-likelihood decoders. Analyzing these joint decoders turns out to be somewhat harder due to the `mixed tuples', but we give some motivation why these decoders seem to work well. We also conjecture that the joint decoder tailored against the interleaving attack achieves the joint uninformed capacity, but proving this result is left for future work. 

\paragraph{Applications to group testing}
Since the all-$1$ attack in fingerprinting is equivalent to a problem known in the literature as group testing~\cite{laarhoven13wifs, meerwald11b}, some of our results can also be applied to this area. In fact, we derive two new results in the area of group testing: (i) any simple-decoder group testing algorithm requires at least $\ell \sim \log_2(e)^2 c \ln n \approx 2.08 c \ln n$ group tests to find $c$ defective items hidden among $n$ items, and (ii) the decoder discussed in Section~\ref{sec:dec-simple} provably achieves this optimal scaling in $\ell$. This decoder was previously considered in~\cite{meerwald11b}, but no provable bounds on the (asymptotic) code lengths were given there.

%WWWWWWWWWWWWWWWWWWWWWWWWWWWWWWWWWWWWWWWWWWWWWWWWWWWWWWWW

\subsection{Outline}

The outline of the paper is as follows. Section~\ref{sec:models} first describes the various different models we consider in this paper, and provides a roadmap for Sections~\ref{sec:cap} and \ref{sec:dec}. Section~\ref{sec:cap} discusses capacity results for each of these models, while Section~\ref{sec:dec} discusses decoders which aim to match the lower bounds on $\ell$ obtained in Section~\ref{sec:cap}. Finally, in Section~\ref{sec:discussion} we conclude with a brief discussion of the most important results and remaining open problems.

%WWWWWWWWWWWWWWWWWWWWWWWWWWWWWWWWWWWWWWWWWWWWWWWWWWWWWWWW
%WWWWWWWWWWWWWWWWWWWWWWWWWWWWWWWWWWWWWWWWWWWWWWWWWWWWWWWW

\section{Different models}
\label{sec:models}

Let us first describe how the results in Sections~\ref{sec:cap} and \ref{sec:dec} are structured according to different assumptions, leading to different models. Besides the general assumptions on the model discussed in the introduction, we further make a distinction between models based on (1) the computational complexity of the decoder, (2) the information about $\vec{\theta}$ known to the distributor, and (3) the collusion channel used by the pirates. These are discussed in Sections~\ref{sec:complexity}, \ref{sec:sideinformation} and \ref{sec:attacks} respectively.

%WWWWWWWWWWWWWWWWWWWWWWWWWWWWWWWWWWWWWWWWWWWWWWWWWWWWWWWW

\subsection{Decoding complexity}
\label{sec:complexity}

Commonly two types of decoders are considered, which use different amounts of information to decide whether a user should be accused or not. 
\begin{enumerate}
  \item \sloppypar{\textbf{Simple decoding}: To quote Moulin~\cite[Section 4.3]{moulin08}: \textit{``The receiver makes an innocent/guilty decision on each user independently of the other users, and there lies the simplicity but also the suboptimality of this decoder.''} In other words, the decision to accuse user $j$ depends only on the $j$th code word of $\mathcal{X}$, and not on other code words from $\mathcal{X}$.}
%  \item \textbf{Tuple decoding}: More computationally intense than the simple decoder, we have the tuple decoder where tuples of size $k$ are considered when computing scores. This can be seen as a generalization of simple (and joint) decoding, since $k = 1$ leads to the simple setting (and $k = c$ to the joint setting). This solution offers a trade-off between the code length and the time complexity of the decoder, as compared to the simple decoder the code length may go down but the complexity of the decoding step goes up.
  \item \sloppypar{\textbf{Joint decoding}: In this case, the decoder is allowed to base the decision whether to accuse a user on the entire code $\mathcal{X}$. Such decoders may be able to obtain smaller code lengths than possible with the best simple decoders.}
\end{enumerate}
Using more information generally causes the time complexity of the decoding step to go up, so usually there is a trade-off between a shorter code length and a faster decoding algorithm.

%WWWWWWWWWWWWWWWWWWWWWWWWWWWWWWWWWWWWWWWWWWWWWWWWWWWWWWWW

\subsection{Side-informed distributors}
\label{sec:sideinformation}

We consider three different scenarios with respect to the knowledge of the distributor about the collusion channel $\vec{\theta}$. Depending on the application, different scenarios may apply.
\begin{enumerate}
  \item \textbf{Fully informed}: Even before $\mathcal{X}$ is generated, the distributor already knows exactly what the pirate attack $\vec{\theta}$ will be. This information can thus be used to optimize both the encoding and decoding phases. This scenario applies to various group testing models, and may apply to dynamic traitor tracing, where after several rounds the distributor may have estimated the pirate strategy.
  \item \textbf{Partially informed}: The tracer does not know in advance what collusion channel will be used, so the encoding is aimed at arbitrary attacks. However, after obtaining the pirate output $\vec{y}$, the distributor does learn more about $\vec{\theta}$ before running an accusation algorithm, e.g.\ by estimating the attack based on the available data. So the encoding is uninformed, but we assume that the decoder is informed and knows $\vec{\theta}$. Since the asymptotically optimal bias distribution function $F$ in fingerprinting is known to be the arcsine distribution $F^*$, we will assume that $F^*$ is used for generating biases. This scenario is similar to EM decoding~\cite{charpentier09, furon09b}.
  \item \textbf{Uninformed}: In this case, both the encoding and decoding phases are assumed to be done without prior knowledge about $\vec{\theta}$, so also the decoder should be designed to work against arbitrary attacks. This is the most commonly studied fingerprinting game.
\end{enumerate}
For simplicity of the analysis, in the partially informed setting we assume that the estimation of the collusion channel is precise, so that $\vec{\theta}$ is known exactly to the decoder. This assumption may not be realistic, but at least we can then obtain explicit expressions for the capacities, and get an idea of how much estimating the strategy may help in reducing the code length. This also allows us to derive explicit lower bounds on $\ell$: even if somehow the attack can be estimated correctly, then the corresponding capacities tell us that we will still need at least a certain number of symbols to find the pirates.

%WWWWWWWWWWWWWWWWWWWWWWWWWWWWWWWWWWWWWWWWWWWWWWWWWWWWWWWW

\subsection{Common collusion channels}
\label{sec:attacks}

As mentioned in the introduction, we assume that collusion channels satisfy the marking assumption, which means that $\theta_0 = 0$ and $\theta_c = 1$. For the remaining values of $z \in \{1, \dots, c-1\}$ the pirates are free to choose how often they want to output a $1$ when they receive $z$ ones. Some commonly considered attacks are listed below. 
\begin{enumerate}
  \item \textbf{Interleaving attack}: The coalition randomly selects one of its members and outputs his symbol. This corresponds to $(\thint)_z = z/c$. This attack is known to be asymptotically optimal (from the point of view of the colluders) in the uninformed max-min fingerprinting game~\cite{huang12}.
  \item \textbf{All-$1$ attack}: The pirates output a $1$ whenever they can, i.e., whenever they have at least one $1$. This translates to $(\thall)_z = \mathbbm{1}\{z > 0\}$. This attack is of particular interest due to its relation with group testing.
  \item \textbf{Majority voting}: The colluders output the most common symbol among their received symbols. This means that $(\thmaj)_z = \mathbbm{1}\{z > c/2\}$.
  \item \textbf{Minority voting}: The traitors output the symbol which they received the least often (but received at least once). For $1 \leq z \leq c - 1$, this corresponds to $(\thmin)_z = \mathbbm{1}\{z < c/2\}$.
  \item \textbf{Coin-flip attack}: If the pirates receive both symbols, they flip a fair coin to decide which symbol to output. So for $1 \leq z \leq c - 1$, this corresponds to $(\thcoi)_z = \frac{1}{2}$.
\end{enumerate}
For even $c$, defining $\theta_{c/2}$ in a consistent way for majority and minority voting is not straightforward. For simplicity, in the analysis of these two attacks we will therefore assume that $c$ is odd. Note that in the uninformed setting, we do not distinguish between different collusion channels; the encoder and decoder should then work against arbitrary attacks.

%WWWWWWWWWWWWWWWWWWWWWWWWWWWWWWWWWWWWWWWWWWWWWWWWWWWWWWWW

\subsection{Roadmap}

The upcoming two sections about capacities (Section~\ref{sec:cap}) and decoders (Section~\ref{sec:dec}) are structured according to the above classification, where first the decoding complexity is chosen, then the side-information is fixed, and finally different attacks are considered. For instance, to find the joint capacity in the fully informed game one has to go to Section~\ref{sec:cap-joint-informed}, while the new simple uninformed decoder can be found in Section~\ref{sec:dec-simple-uninformed}.

%WWWWWWWWWWWWWWWWWWWWWWWWWWWWWWWWWWWWWWWWWWWWWWWWWWWWWWWW
%WWWWWWWWWWWWWWWWWWWWWWWWWWWWWWWWWWWWWWWWWWWWWWWWWWWWWWWW

\section{Capacities}
\label{sec:cap}

\sloppypar{In this section we establish lower bounds on the code length of any valid decoder, by inspecting the information-theoretic capacities of the various fingerprinting games. We will use some common definitions from information theory, such as the binary entropy function $h(x) = -x \log_2 x - (1 - x) \log_2(1-x)$, the relative entropy or Kullback-Leibler divergence $d(x\|y) = x \log_2(x/y) + (1-x) \log_2((1-x)/(1-y))$, and the mutual information $I(X;Y) = \sum_{x,y} \pr(x,y) \log_2 (\pr(x,y) / \pr(x) \pr(y))$. The results in this section build further upon previous work on this topic by Huang and Moulin~\cite{huang12}.}

%WWWWWWWWWWWWWWWWWWWWWWWWWWWWWWWWWWWWWWWWWWWWWWWWWWWWWWWW

\subsection{Simple capacities}
\label{sec:cap-simple}

For simple decoders, we assume that the decision whether to accuse user $j$ is based solely on $\vec{X}_j$, $\vec{Y}$ and $\vec{P}$. Focusing on a single position, and denoting the random variables corresponding to a colluder's symbol, the pirate output, and the bias in this position by $X_1$, $Y$ and $P$, the interesting quantity to look at~\cite{huang09} is the mutual information $I(X_1;Y|P = p)$. This quantity depends on the pirate strategy $\vec{\theta}$ and on the bias $p$. To study this mutual information we will use the following equality~\cite[Equation~(61)]{huang12},
\begin{align}
I(X_1;Y|P = p) = p d(a_1 \| a) + (1 - p) d(a_0 \| a),
\end{align}
where $a, a_0, a_1$ are defined as
\begin{align}
a &= \sum_{z = 0}^c \binom{c}{z} p^z (1 - p)^{c-z} \theta_z, \\
a_0 &= \sum_{z = 0}^{c-1} \binom{c - 1}{z} p^z (1 - p)^{c-z-1} \theta_z, \\
a_1 &= \sum_{z = 1}^c \binom{c - 1}{z - 1} p^{z-1} (1 - p)^{c-z} \theta_z.
\end{align}
Note that given $p$ and $\vec{\theta}$, the above formulas allow us to compute the associated mutual information explicitly. 

%-------------------------------------------------------%

\subsubsection{Fully informed}
\label{sec:cap-simple-informed}

In the fully informed setting we are free to choose $F$ to maximize the capacity, given a collusion channel $\vec{\theta}$. When the attack is known to the distributor in advance, there is no reason to use different values of $p$; the distributor should always use the value of $p$ that maximizes the mutual information payoff $I(X_1; Y|P = p)$. Given an attack strategy $\vec{\theta}$, the capacity we are interested in is thus
\begin{align}
\caps(\vec{\theta}) = \max_p I(X_1; Y|P = p).
\end{align}
For general attacks finding the optimal value of $p$ analytically can be hard, but for certain specific attacks we can investigate the resulting expressions individually to find the optimal values of $p$ that maximize the mutual information. This leads to the following results for the five attacks listed in Section~\ref{sec:attacks}. Proofs will appear in the full version.

\begin{theorem} \label{thm:cap-simple-informed}
The simple informed capacities and the corresponding optimal values of $p$ for the five attacks of Section~\ref{sec:attacks} are:
\begin{alignat}{2}
\caps(\thint) &\sim \frac{1}{2c^2 \ln 2}, \qquad & \pints &= \frac{1}{2}, \tag{S1} \label{eqs1} \\ 
\caps(\thall) &\sim \frac{\ln 2}{c}, & \palls &\sim \frac{\ln 2}{c}, \tag{S2} \label{eqs2} \\% 
% First one can prove that the second part is O(1/c^2) (see notes); then it can be shown that neither p = o(1/c) and p = omega(1/c) can lead to a derivative equal to 0; then assuming p = a/c for a(c) converging to a constant a we get a limit which has a maximum at a = ln 2, for which the first order term indeed becomes O(1/c^2)
\caps(\thmaj) &\sim \frac{1}{\pi c \ln 2}, & \pmajs &= \frac{1}{2}, \tag{S3} \label{eqs3} \\%
\caps(\thmin) &\sim \frac{\ln 2}{c}, & \pmins &\sim \frac{\ln 2}{c}, \tag{S4} \label{eqs4} \\
\caps(\thcoi) &\sim \frac{\ln 2}{4 c}, & \pcois &\sim \frac{\ln 2}{2c}. \tag{S5}\label{eqs5} 
\end{alignat}
\end{theorem}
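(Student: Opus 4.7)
The plan is to substitute each attack $\vec{\theta}$ into the identity
\[
I(X_1;Y|P=p) = p\,d(a_1\|a) + (1-p)\,d(a_0\|a),
\]
evaluate the three quantities $a, a_0, a_1$ as functions of $p$, and then maximize in $p$. The five attacks fall naturally into two groups according to how $a_0$ and $a_1$ sit relative to $a$. For the \emph{small-deviation} attacks \eqref{eqs1} and \eqref{eqs3}, both $a_0$ and $a_1$ lie within $O(1/\sqrt{c})$ of $a$, so the right tool is the Taylor expansion
\[
d(a \pm \delta \| a) \sim \frac{\delta^2}{2 a (1 - a) \ln 2}.
\]
For the \emph{large-deviation} attacks \eqref{eqs2}, \eqref{eqs4}, \eqref{eqs5}, $a_1$ stays bounded away from $a$, the optimal bias scales like $p = s/c$, and the right tool is the Poisson limit $(1-s/c)^c \to e^{-s}$ together with the observation that the $(1-p)\,d(a_0\|a)$ summand ends up being $O(1/c^2)$ and so is negligible next to $p\,d(a_1\|a) = \Theta(1/c)$.

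For \eqref{eqs1}, $\theta_z = z/c$ gives exactly $a = p$, $a_0 = (c-1)p/c$ and $a_1 = ((c-1)p + 1)/c$, so that $a_1 - a = (1-p)/c$ and $a - a_0 = p/c$. Substituting into the expansion, $p\,d(a_1\|a) \sim (1-p)/(2c^2 \ln 2)$ and $(1-p)\,d(a_0\|a) \sim p/(2c^2 \ln 2)$ sum to the $p$-free value $1/(2c^2\ln 2)$, so every $p \in (0,1)$ is asymptotically optimal and we pick the symmetric $p = 1/2$. For \eqref{eqs3}, the symmetry of majority voting at $p = 1/2$ forces $a = 1/2$; writing $c = 2k+1$, a direct binomial manipulation followed by Stirling gives $a_1 - a = a - a_0 = \binom{2k}{k}2^{-(2k+1)} \sim 1/(2\sqrt{\pi k})$, and the expansion yields $I \sim 1/(2\pi k\ln 2) \sim 1/(\pi c \ln 2)$.

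For \eqref{eqs2}, $a_1 = 1$ exactly, $a = 1 - (1-p)^c$, and $a_0 = 1 - (1-p)^{c-1}$. With $p = s/c$ one has $a \to 1 - e^{-s}$ and $a - a_0 = O(p e^{-s})$, hence $d(a_0\|a) = O(1/c^2)$, and the dominant term is $I \sim (s/c)(-\log_2(1 - e^{-s}))$. Writing $q = 1 - e^{-s}$, the first-order condition reduces to the symmetric equation $q\ln q = (1-q)\ln(1-q)$, whose unique interior root is $q = 1/2$, giving $s = \ln 2$ and capacity $(\ln 2)/c$. For \eqref{eqs4}, the key observation is that at $p \sim (\ln 2)/c$ the binomial mass at $z \geq c/2$ is exponentially small in $c$, so $a, a_0, a_1$ coincide with their all-$1$ counterparts up to negligible terms, and the same optimization applies verbatim. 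For \eqref{eqs5}, a short computation gives $a_1 = 1/2 + p^{c-1}/2$ and $a, a_0 = (1 - e^{-s})/2 + o(1)$ under $p = s/c$, so $d(a_1\|a) \to -\tfrac{1}{2}\log_2(1 - e^{-2s})$; the substitution $u = 2s$ reduces the problem to the same maximization as in the all-$1$ case, yielding the optimum $u = \ln 2$, i.e.\ $p = (\ln 2)/(2c)$, with capacity $(\ln 2)/(4c)$.

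The main obstacle is not the final single-variable maximizations, which are elementary, but the uniform control of the subleading terms. Specifically, I would need to verify that the ``endpoint'' contributions from $z = c$ in the minority and coin-flip cases, from $\theta_{c/2}$ in the even-$c$ majority and minority cases (which we sidestep by restricting to odd $c$), and the whole $(1-p)\,d(a_0\|a)$ summand in the large-deviation group, are genuinely of lower order than the claimed asymptotics. Writing everything in the scaled variable $s = cp$ and tracking a third-order remainder in the expansion of $d(\cdot\|\cdot)$ for the small-deviation cases makes these error estimates transparent and reduces each line of the theorem to a short calculation.
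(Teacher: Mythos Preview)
The paper does not actually contain a proof of this theorem: immediately after the statement it says ``Proofs will appear in the full version.'' So there is no in-paper argument to compare your proposal against, and I can only comment on the soundness of your plan.

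Your computations at the claimed optimizers are correct. For \eqref{eqs1} the identities $a=p$, $a_1-a=(1-p)/c$, $a-a_0=p/c$ and the quadratic expansion of $d(\cdot\|\cdot)$ do give the $p$-free value $1/(2c^2\ln 2)$. For \eqref{eqs2} the first-order condition indeed collapses to $q\ln q=(1-q)\ln(1-q)$ with $q=1-e^{-s}$, whence $s=\ln 2$. For \eqref{eqs3} your central-term count $a_1-a=\binom{2k}{k}2^{-(2k+1)}\sim 1/(2\sqrt{\pi k})$ is right, and the quadratic expansion gives $1/(\pi c\ln 2)$. For \eqref{eqs5} the reduction $d(1/2\|a)=-\tfrac12\log_2(4a(1-a))=-\tfrac12\log_2(1-e^{-2s})$ and the substitution $u=2s$ are correct and land on $(\ln 2)/(4c)$.

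The genuine gap is that you only ever \emph{evaluate} $I(X_1;Y|P=p)$ at the stated $p$, or optimize within a pre-chosen scaling regime, without showing that the global maximum over $p\in(0,1)$ is actually attained there. For \eqref{eqs3} you invoke the $p\leftrightarrow 1-p$ symmetry of majority voting, but symmetry only tells you $p=1/2$ is a critical point, not that it is the global maximizer; you still need an argument (e.g.\ concavity, or a direct upper bound on $I(p)$) ruling out a larger value at some $p\neq 1/2$. For \eqref{eqs2}, \eqref{eqs4}, \eqref{eqs5} you optimize over $s=cp$ with $s$ bounded, which is fine once you know the optimum lies in the regime $p=\Theta(1/c)$, but you have not excluded $p$ bounded away from $0$ (and, for minority and coin-flip, away from $1$); a one-line observation that $a,a_0,a_1$ then all collapse to the same limit so $I(p)\to 0$ would close this. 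For \eqref{eqs4} in particular, reducing to the all-$1$ analysis near $p\sim(\ln 2)/c$ is correct, but minority voting also has the symmetric regime near $p\sim 1-(\ln 2)/c$, and you should at least remark that neither $p$ near $1/2$ nor any intermediate $p$ beats the endpoint value. These are not deep obstacles, but the theorem asserts the \emph{capacity} (a maximum), so the plan should include the upper-bound half explicitly.
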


Since fully informed protection against the all-$1$ attack is equivalent to noiseless group testing, and since the code length $\ell$ scales in terms of the capacity $C$ as $\ell \geq C^{-1} \log_2 n$, we immediately get the following corollary.

\begin{corollary} \label{cor:cap-simple-informed-group}
Any simple group testing algorithm for $c$ defectives and $n$ total items requires an asymptotic number of group tests $\ell$ of at least 
\begin{align}
\ell \sim \frac{c \ln n}{\ln(2)^2} \approx 2.08 c \ln n.
\end{align}
\end{corollary}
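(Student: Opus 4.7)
The plan is to derive the corollary directly from part (S2) of Theorem~\ref{thm:cap-simple-informed}, using two ingredients: (a) the converse result in information-theoretic fingerprinting which states that no decoder can succeed with code length shorter than $\ell \geq C^{-1} \log_2 n$ (up to $o(\ln n)$ corrections), and (b) the observation, already noted in the introduction and in Section~\ref{sec:attacks}, that noiseless group testing is the special case of the fingerprinting game in which the pirate strategy is fixed in advance to the all-$1$ attack $\thall$, and this strategy is known to the distributor. Because this fixed, known strategy places the decoder squarely in the fully informed regime of Section~\ref{sec:cap-simple-informed}, any lower bound on $\ell$ derived from $\caps(\thall)$ automatically transfers to a lower bound for simple-decoder group testing algorithms.

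First I would justify the equivalence explicitly: under the all-$1$ attack the pool output $Y$ equals the OR of the defective items' indicators, which is exactly the outcome of a standard group test; bias-based random pooling with bias $p$ corresponds to each item being included in a test with probability $p$ independently. A simple group-testing decoder is any procedure that accuses a single item based only on that item's column of the test matrix together with the outcomes, which matches the definition of a simple decoder in Section~\ref{sec:complexity}. Thus lower bounds for the simple fully informed all-$1$ setting apply verbatim.

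Next I would invoke the capacity converse: for any $\eps_1, \eps_2 > 0$ there is no simple decoder with $\ell < (1-o(1)) C^{-1} \log_2 n$ in the fully informed setting for a fixed known $\vec{\theta}$. Combining this with the specific value $\caps(\thall) \sim (\ln 2)/c$ from (S2) gives
\begin{align}
\ell \;\geq\; \frac{\log_2 n}{\caps(\thall)} \;\sim\; \frac{c \log_2 n}{\ln 2} \;=\; \frac{c \ln n}{(\ln 2)^2},
\end{align}
and numerically $1/(\ln 2)^2 \approx 2.0814$, which yields the stated bound.

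The only nontrivial step is the converse $\ell \gtrsim C^{-1} \log_2 n$, but this is standard from the capacity results cited in the introduction (Huang--Moulin, Amiri--Tardos) and will have been invoked elsewhere in Section~\ref{sec:cap} anyway; given Theorem~\ref{thm:cap-simple-informed}, the corollary is essentially an arithmetic restatement. Thus the main obstacle is not in the corollary itself but in establishing (S2), which is relegated to the proof of Theorem~\ref{thm:cap-simple-informed} in the full version.
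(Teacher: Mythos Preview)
Your proposal is correct and follows exactly the route the paper takes: the paper derives the corollary in one sentence by combining the equivalence of group testing with the fully informed all-$1$ setting, the converse $\ell \geq C^{-1}\log_2 n$, and the value $\caps(\thall)\sim (\ln 2)/c$ from (S2). Your write-up simply makes these steps more explicit than the paper does.
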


Note that this seems to contradict earlier results of \cite{laarhoven13allerton}, which suggested that under a certain Gaussian assumption, only $\ell \sim 2 c \ln n$ tests are required. This apparent contradiction is caused by the fact that the Gaussian assumption in~\cite{laarhoven13allerton} is not correct in the regime of small $p$, for which those results were derived. In fact, the distributions considered in that paper roughly behave like binomial distributions over $\ell$ trials with probability of success of $O(1/\ell)$, which converge to Poisson distributions. Numerical inspection shows that the relevant distribution tails are indeed not very Gaussian and do not decay fast enough. Rigorous analysis of the scores in~\cite{laarhoven13allerton} shows that an asymptotic code length of about $3 c \ln n$ is sufficient when $p \sim \ln(2)/c$, which is well above the lower bound of Corollary~\ref{cor:cap-simple-informed-group}. Details can be found in the full version.

%-------------------------------------------------------%

\subsubsection{Partially informed}
\label{sec:cap-simple-part}

If the encoder is uninformed, then the best he can do against arbitrary attacks (for large $c$) is to generate biases using the arcsine distribution $F^*$. So instead of computing the mutual information in one point $P = p$, we now average over different values of $p$ where $p$ follows the arcsine distribution. So the capacity we are interested in is given by
\begin{align}
\caps(\vec{\theta}) = \expn_p I(X_1;Y|P = p) = \int_0^1 \frac{I(X_1;Y|P = p)}{\pi \sqrt{p(1 - p)}} \, dp.
\end{align}
The resulting integrals are hard to evaluate analytically, even for large $c$, although for some collusion channels we can use Pinsker's inequality (similar to the proof of~\cite[Theorem 7]{huang12}) to show that $\caps(\vec{\theta}) = \Omega(c^{-3/2})$. And indeed, if we look at the numerics of $c^{3/2} \caps(\vec{\theta})$ in Figure~\ref{fig:cap-simple-part}, it seems that the partially informed capacity usually scales as $c^{-3/2}$. As a consequence, even if the attack can be estimated exactly, then still a code length of the order $\ell \propto c^{3/2} \ln n$ is required to get a scheme that works. Note that for the interleaving attack, the capacity scales as $c^{-2}$.

\begin{figure}
\centering
\includegraphics[width=\columnwidth]{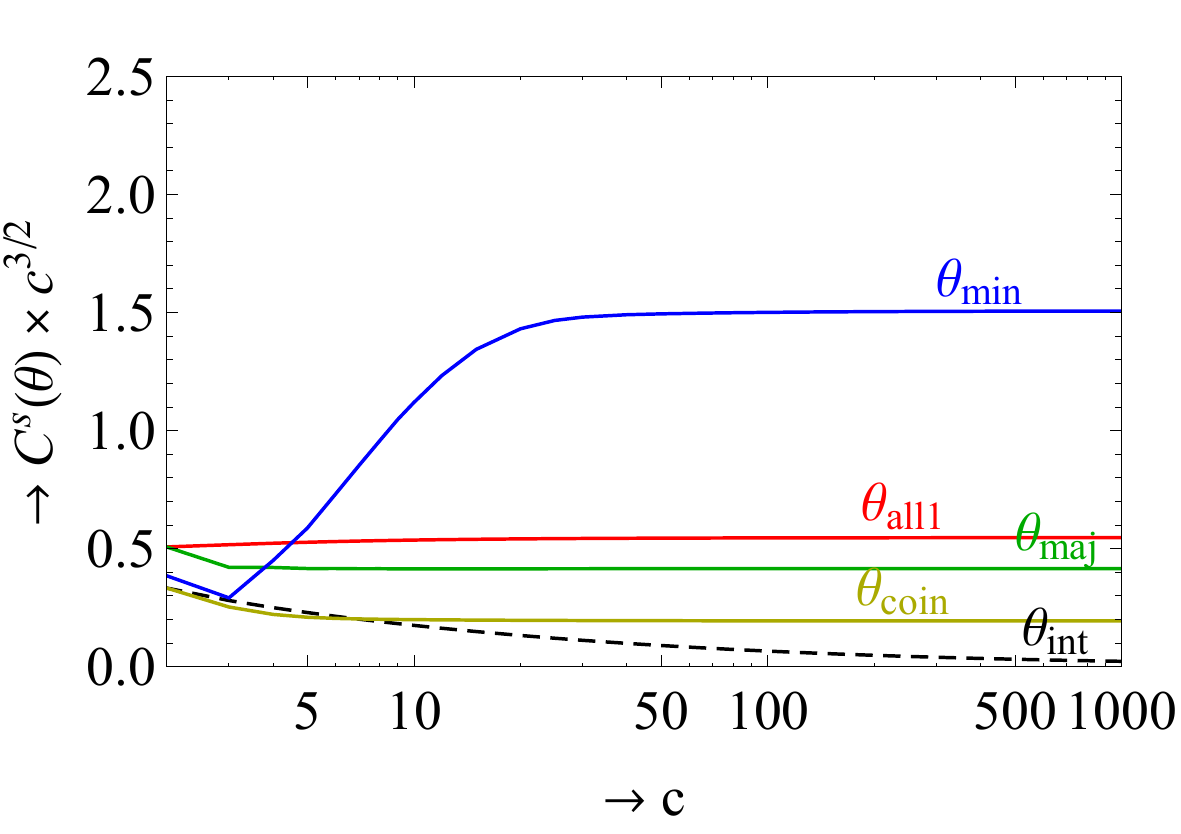}
\caption{The partially informed simple capacity (multiplied by $c^{3/2}$) as a function of $c$ for different pirate attacks. From top to bottom these curves correspond to minority voting, the all-$1$ attack, majority voting, the coin-flip attack, and the interleaving attack. Except for the interleaving attack, for which the capacity scales as $c^{-2}$ (the dashed line), these capacities all seem to scale as $c^{-3/2}$. \label{fig:cap-simple-part}}
\end{figure}

%\paragraph{All-$1$ attack}
%
%For the all-$1$ attack we get a provable lower bound on the capacity as (using a similar analysis to what is done in Theorem 7):
%\begin{align}
%C_F(\vec{\theta}) \geq \frac{1}{2 \sqrt{2 \pi} \ln 2 c^{3/2}} + O\left(c^{-5/2}\right) \approx \frac{1}{5.013 c^{3/2} \ln 2}
%\end{align}
%This would thus correspond to a maximum code length of $\ell \sim 5.01 c^{3/2} \ln(n/\eps_1)$. This is not very precise though, since again the $d(1\|1 - (1 - p)^c)$ cannot be approximated well with Pinsker's inequality. If we do the numerics exactly we see a trend that the capacity is roughly
%\begin{align}
%C_F(\vec{\theta}) \approx \frac{1}{2.64 c^{3/2} \ln 2}
%\end{align}
%This would thus correspond to a code length constant of roughly $2.64$ (and exponent $3/2$).
%
%\paragraph{Coin-flip attack}
%
%Again, we can get a provable bound of
%\begin{align}
%C_F(\vec{\theta}) \geq \frac{1}{4 \sqrt{2 \pi} \ln 2 c^{3/2}} + O\left(c^{-5/2}\right) \approx \frac{1}{10.026 c^{3/2} \ln 2}
%\end{align}
%This is again not very exact, and numerics tell us that
%\begin{align}
%C_F(\vec{\theta}) \approx \frac{1}{7.47 c^{3/2} \ln 2}
%\end{align}
%Note that this is higher than what is suggested by Oosterwijk et al., thus contradicting capacity.

%-------------------------------------------------------%

\subsubsection{Uninformed}
\label{sec:cap-simple-uninformed}

For the uninformed fingerprinting game, where both the encoder and decoder are built to work against arbitrary attacks, we are interested in the following max-min game:
\begin{align}
\caps = \max_F \min_{\vec{\theta}(F)} \expn_p I(X_1;Y|P = p).
\end{align}
Huang and Moulin~\cite{huang12, huang12b} previously solved this uninformed game for asymptotically large coalition sizes $c$ as follows.

\begin{proposition} \cite[Theorem 3]{huang12b}
The simple uninformed capacity is given by
\begin{align}
\caps \sim \frac{1}{2 c^2 \ln 2},
\end{align}
and the optimizing encoder $F$ and collusion channel $\vec{\theta}$ achieving this bound for large $c$ are the arcsine distribution $F^*$ and the interleaving attack $\thint$.
\end{proposition}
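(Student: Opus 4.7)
The plan is to verify that $(F^*, \thint)$ forms a saddle point of the max-min game, via matching upper and lower bounds both equal to $(2c^2 \ln 2)^{-1}$. For the upper bound, fix the strategy $\vec{\theta} = \thint$ and observe that substituting $\theta_z = z/c$ into the binomial sums defining $a, a_0, a_1$ (and using the mean of a binomial) yields $a = p$, $a_1 - a = (1-p)/c$, and $a_0 - a = -p/c$. The second-order Taylor expansion $d(x\|y) = (x-y)^2/(2 y(1-y) \ln 2) + O((x-y)^3/(y(1-y))^2)$ then gives $I(X_1;Y|P=p) = (2c^2\ln 2)^{-1} + O(c^{-3})$ uniformly for $p$ bounded away from $\{0,1\}$, and tails are controlled by $I \leq h(p)$. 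Averaging against any admissible $F$ yields $\caps \leq (2c^2 \ln 2)^{-1}(1+o(1))$.

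For the lower bound, fix $F = F^*$ and let $\vec{\theta}$ be arbitrary. Since $a = p a_1 + (1-p) a_0$, we have $a_1 - a = (1-p)(a_1-a_0)$ and $a_0 - a = -p(a_1 - a_0)$, so the same Taylor expansion collapses $I(X_1;Y|P=p) = p\, d(a_1\|a) + (1-p)\, d(a_0\|a)$ to the leading form $I \approx p(1-p)(a_1-a_0)^2 / [2 a(1-a)\ln 2]$. Differentiating $a(p) = \sum_z \binom{c}{z} p^z(1-p)^{c-z}\theta_z$ directly gives the clean identity $a_1 - a_0 = a'(p)/c$, so averaging against the arcsine density $(\pi\sqrt{p(1-p)})^{-1}$ reduces the lower bound to the functional inequality
\begin{align}
J(a) := \int_0^1 \frac{\sqrt{p(1-p)}\,(a'(p))^2}{a(p)(1-a(p))}\,dp \geq \pi
\end{align}
for all polynomials $a$ of degree $c$ satisfying $a(0) = 0$ and $a(1) = 1$ (the marking assumption).

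This inequality becomes transparent under the double substitution $a = \sin^2\phi$, $p = \sin^2\psi$: after cancellations (using $\sin^2(2\phi) = 4\sin^2\phi\cos^2\phi$) the integrand reduces to $2(d\phi/d\psi)^2\,d\psi$, so $J(a) = 2\int_0^{\pi/2}(d\phi/d\psi)^2\,d\psi$. Cauchy--Schwarz together with the boundary conditions $\phi(0) = 0$ and $\phi(\pi/2) = \pi/2$ then yields $J(a) \geq 2 \cdot (\pi/2)^2/(\pi/2) = \pi$, with equality if and only if $d\phi/d\psi$ is constant, i.e. $\phi = \psi$, i.e. $a(p) = p$ --- precisely the interleaving attack. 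This closes the saddle point.

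The principal obstacle is making the quadratic approximation of the divergence uniform in $(\vec\theta, p)$: for poorly-behaved attacks, $a(p)$ can approach $0$ or $1$ inside the arcsine support, and the Taylor remainder blows up there. I would handle this by inserting cut-offs of the form $p \in [c^{-\alpha}, 1 - c^{-\alpha}]$ for a suitably chosen $\alpha$, bounding the interior contribution by the quadratic term with $o(1)$ relative error and the tails by the crude estimate $I \leq h(p)$, whose integral against the arcsine density on the tails is $o(c^{-2})$ and so does not affect the leading order. A secondary subtlety is keeping careful track of the Taylor remainders on both sides of the saddle, so that the $\sim$ asymptotic (rather than merely $\gtrsim,\lesssim$) is genuinely justified.
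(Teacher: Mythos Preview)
The paper does not supply its own proof of this proposition: it is quoted verbatim as \cite[Theorem~3]{huang12b} and attributed to Huang and Moulin, with no accompanying argument. So there is nothing in the paper to compare your proposal against.

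That said, your sketch is a faithful reconstruction of the Huang--Moulin argument itself. The upper bound via fixing $\thint$ and Taylor-expanding the divergence, the identity $a_1 - a_0 = a'(p)/c$, the reduction of the lower bound to the functional inequality $J(a) \geq \pi$, and the double arcsine substitution $a = \sin^2\phi$, $p = \sin^2\psi$ turning $J$ into a Dirichlet energy minimized by Cauchy--Schwarz at $\phi = \psi$ --- this is exactly their route, and your calculations check out. You have also correctly identified the genuine analytic work: making the quadratic approximation of $d(\cdot\|\cdot)$ uniform when $a(p)$ may approach $\{0,1\}$ under an adversarial $\vec\theta$, and controlling the arcsine tails via the crude bound $I \leq h(p)$. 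Those are precisely the places where the original proof spends its effort, so your plan is sound.
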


Note that while for the interleaving attack the capacity is the same (up to order terms) for each of the three side-informed cases, for the four other attacks the capacity gradually increases from $O(c^{-2})$ to $O(c^{-3/2})$ to $O(c^{-1})$ when the distributor is more and more informed.

%WWWWWWWWWWWWWWWWWWWWWWWWWWWWWWWWWWWWWWWWWWWWWWWWWWWWWWWW

\subsection{Joint capacities}
\label{sec:cap-joint}

If the computational complexity of the decoder is not an issue, joint decoding may be an option. In that case, the relevant quantity to examine is the mutual information between the symbols of all colluders, denoted by $X_1, \dots, X_c$, and the pirate output $Y$, given $P$: $I(X_1, \dots, X_c; Y|P = p)$~\cite{huang12}. Note that $Y$ only depends on $X_1, \dots, X_c$ through $Z = \sum X_i$, so $I(X_1, \dots, X_c; Y|P = p) = I(Z; Y|P = p)$. To compute the joint capacities, we use the following convenient explicit formula~\cite[Equation~(59)]{huang12}:
\begin{align}
\frac{1}{c} I(Z; Y|P = p) = \frac{1}{c} \left[h(a) - a_h\right],
\end{align}
where $h(p) = -p \log_2 p - (1 - p) \log_2(1 - p)$ is the binary entropy function, and $a_h$ is defined as 
\begin{align}
a_h &= \sum_{z = 0}^c \binom{c}{z} p^z (1 - p)^{c-z} h(\theta_z).
\end{align}

%-------------------------------------------------------%

\subsubsection{Fully informed}
\label{sec:cap-joint-informed}

In the fully informed setting, the capacity is again obtained by considering the mutual information and maximizing it as a function of $p$:
\begin{align}
\capj(\vec{\theta}) = \frac{1}{c} \max_p I(Z;Y|P = p).
\end{align}
Computing this is very easy for the all-$1$ attack, the majority voting attack and the minority voting attack, since one can easily prove that the joint capacity is equal to $\frac{1}{c}$ whenever the collusion channel is deterministic, e.g. when $\theta_z \in \{0,1\}$ for all $z$. Since the capacity for the interleaving attack was already known, the only non-trivial case is the coin-flip attack. A proof of the following theorem can be found in the full version.

\begin{theorem}
The joint informed capacities and the corresponding optimal values of $p$ for the five attacks of Section~\ref{sec:attacks} are:
\begin{alignat}{2}
\capj(\thint) &\sim \frac{1}{2c^2 \ln 2}, & \pintj &= \frac{1}{2}, \tag{J1} \label{eqj1} \\
\capj(\thall) &= \frac{1}{c}, & \pallj &\sim \frac{\ln 2}{c}, \tag{J2} \label{eqj2} \\
\capj(\thmaj) &= \frac{1}{c}, & \pmajj &= \frac{1}{2}, \tag{J3} \label{eqj3} \\
\capj(\thmin) &= \frac{1}{c}, & \pminj &= \frac{1}{2}, \tag{J4} \label{eqj4} \\
\capj(\thcoi) &\sim \frac{\log_2(5/4)}{c}, \qquad & \pcoij &\sim \frac{\ln(5/3)}{c}. \tag{J5}\label{eqj5} 
\end{alignat}
\end{theorem}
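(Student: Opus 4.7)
The plan is to apply the identity $c \cdot \capj(\vec{\theta}) = \max_p \bigl[h(a) - a_h\bigr]$ attack by attack, exploiting one clean structural observation: three of the five attacks are \emph{deterministic}, meaning $\theta_z \in \{0,1\}$ for every $z$. For such attacks $h(\theta_z) = 0$ and therefore $a_h \equiv 0$, so the maximization collapses to finding $p$ with $a(p) = 1/2$; this makes $h(a) = 1$, so $\capj = 1/c$ holds \emph{exactly} (not merely asymptotically).

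For $\thall$, $a(p) = 1 - (1-p)^c$, so $a(p) = 1/2$ gives $p = 1 - 2^{-1/c} \sim \ln(2)/c$, yielding \eqref{eqj2}. For $\thmaj$ with $c$ odd, $\theta_z = 1$ exactly when $z \ge (c+1)/2$, and the identity $\sum_{z \ge (c+1)/2} \binom{c}{z} = 2^{c-1}$ gives $a(1/2) = 1/2$, producing \eqref{eqj3}. For $\thmin$ one must add the marking-assumption contribution from $\theta_c = 1$ to the indicator $\mathbbm{1}\{z < c/2\}$, which is active on $\{1, \ldots, (c-1)/2\}$; at $p = 1/2$ this yields $a(1/2) = 2^{-c}\bigl[(2^{c-1} - 1) + 1\bigr] = 1/2$, establishing \eqref{eqj4}. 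The interleaving entry \eqref{eqj1} uses $a(p) = p$ together with the Taylor expansion $h(x) = 1 - 2(x - 1/2)^2/\ln 2 + O((x - 1/2)^4)$ near $x = 1/2$: since $Z/c$ has variance $1/(4c)$ at $p = 1/2$, one obtains $a_h \sim 1 - 1/(2c \ln 2)$, and the asymptotic follows after division by $c$.

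The coin-flip case \eqref{eqj5} is where real work is required. Substituting $\theta_0 = 0$, $\theta_c = 1$ and $\theta_z = 1/2$ for $1 \le z \le c - 1$ yields
\begin{align}
a = \tfrac{1}{2} + \tfrac{1}{2}\bigl(p^c - (1-p)^c\bigr), \qquad a_h = 1 - p^c - (1-p)^c.
\end{align}
The objective is symmetric under $p \mapsto 1 - p$, so I restrict to $p \le 1/2$ and set $p = \alpha/c$. As $c \to \infty$, $p^c$ is negligible and $(1-p)^c \to e^{-\alpha}$, giving the limiting objective $\phi(\alpha) = h\bigl((1 - e^{-\alpha})/2\bigr) - (1 - e^{-\alpha})$. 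Writing $t = e^{-\alpha}$ and using $h'(x) = \log_2((1-x)/x)$, the first-order condition becomes $\log_2\bigl((1 + t)/(1 - t)\bigr) = 2$, i.e., $t = 3/5$; hence $\alpha = \ln(5/3)$ and the maximum of $\phi$ equals $h(1/5) - 2/5 = \log_2(5/4)$. Dividing by $c$ delivers both entries of \eqref{eqj5}.

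The main obstacle is the coin-flip asymptotics: I need to confirm that the supremum over $p \in [0,1]$ is genuinely attained in the window $p = \Theta(1/c)$. For $p$ of constant order the objective vanishes because $a \to 1/2$ while $a_h \to 1$; away from the scaling window, a short monotonicity argument in $\alpha$ is needed to confirm that $\ln(5/3)$ is the unique maximizer of $\phi$ and that the $O(1/c)$ corrections from $p^c$ and from the finite-$c$ approximation of $(1-p)^c$ do not shift the leading-order values asserted in \eqref{eqj5}.
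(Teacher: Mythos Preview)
Your proposal is correct and follows essentially the same route the paper sketches: for the three deterministic attacks you use exactly the observation the paper highlights (that $\theta_z\in\{0,1\}$ forces $a_h\equiv 0$, so $\capj=\tfrac{1}{c}$ is attained at any $p$ with $a(p)=\tfrac{1}{2}$), and your explicit identifications of such $p$ for $\thall$, $\thmaj$, $\thmin$ are the intended ones. The paper itself does not prove the remaining two cases here---it cites \eqref{eqj1} as known from Huang--Moulin and defers \eqref{eqj5} to the full version---so your Taylor argument for the interleaving case and your $p=\alpha/c$ scaling analysis for the coin-flip case (reducing to maximizing $h(x)-2x$ at $x=\tfrac{1}{5}$, hence $\alpha=\ln(5/3)$ and value $\log_2(5/4)$) in fact supply more than the paper does in this version, and your caveat about confirming the coin-flip optimum lies in the $\Theta(1/c)$ window is the right residual check.
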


Recall that there is a one-to-one correspondence between the all-$1$ attack and group testing, so the result above establishes firm bounds on the asymptotic number of group tests required by any probabilistic group testing algorithm. This result was already known, and was first derived by Seb\H{o}~\cite[Theorem 2]{sebo85}.

%-------------------------------------------------------%

\subsubsection{Partially informed}
\label{sec:cap-joint-part}

For the partially informed capacity we again average over the mutual information where $p$ is drawn at random from the arcsine distribution $F^*$. Thus the capacity is given by
\begin{align}
\capj(\vec{\theta}) = \frac{1}{c} \expn_p I(Z;Y|P = p).
\end{align}
Exact results are again hard to obtain, but we can at least compute the capacities numerically to see how they behave. Figure~\ref{fig:cap-joint-part} shows the capacities of the five attacks of Section~\ref{sec:attacks}. Although the capacities are higher for joint decoding than for simple decoding, the joint capacities of all attacks but the interleaving attack also scale as $c^{-3/2}$.

\begin{figure}[!t]
\centering
\includegraphics[width=\columnwidth]{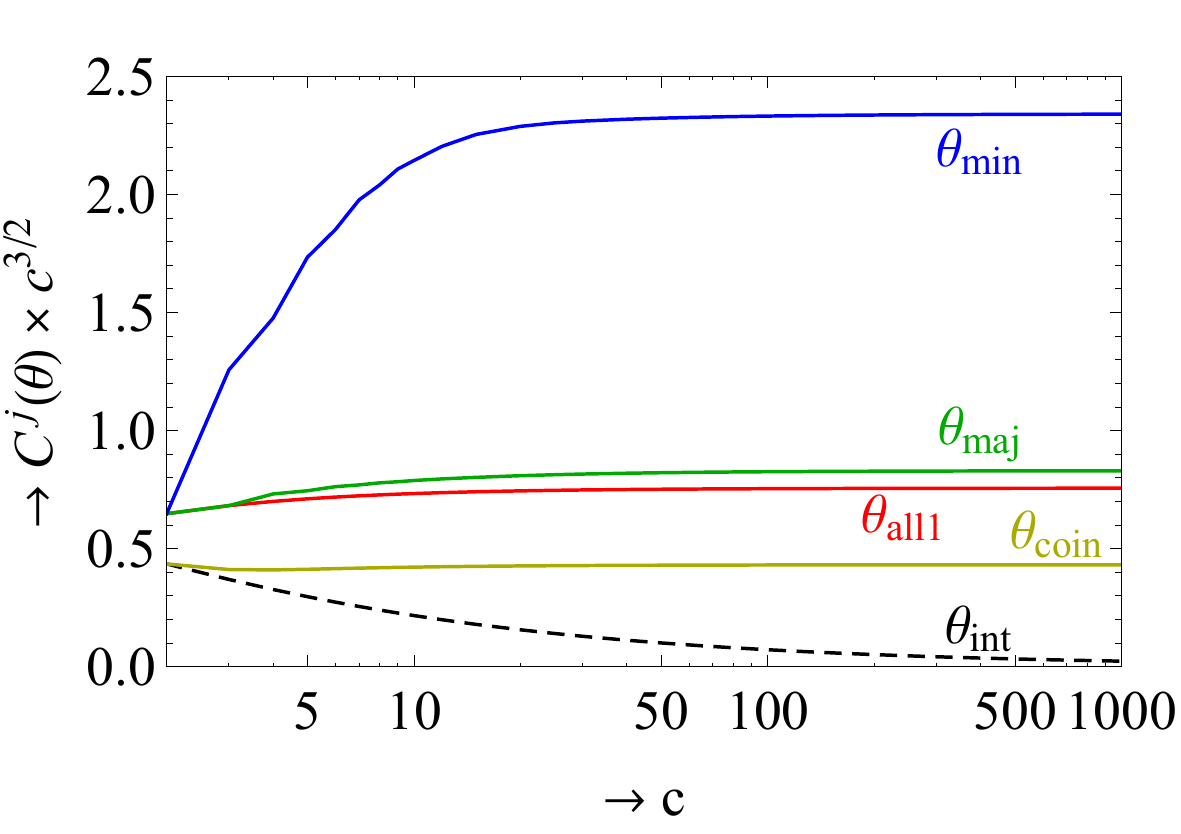}
\caption{The partially informed joint capacity, as a function of $c$, for different attacks. From top to bottom these are minority voting, majority voting, the all-$1$ attack, the coin-flip attack and the interleaving attack. Except for the interleaving attack, these capacities all seem to scale as $c^{-3/2}$. \label{fig:cap-joint-part}}
\end{figure}

%-------------------------------------------------------%

\subsubsection{Uninformed}
\label{sec:cap-joint-uninformed}

Finally, if we are working with joint decoders which are supposed to work against arbitrary attacks, then we are interested in the following max-min mutual information game:
\begin{align}
\capj = \max_F \min_{\vec{\theta}(F)} \expn_p \frac{1}{c} I(Z;Y|P = p).
\end{align}
This joint capacity game was previously solved by Huang and Moulin~\cite{huang12} who showed that also in the joint game, the interleaving attack $\thint$ and the arcsine distribution $F^{*}$ together form a saddle-point solution to the uninformed fingerprinting game.

\begin{proposition} \cite[Theorem 6, Corollary 7]{huang12}
The joint uninformed capacity is given by
\begin{align}
\capj \sim \frac{1}{2 c^2 \ln 2},
\end{align}
and the optimizing encoder $F$ and collusion channel $\vec{\theta}$ achieving this bound for large $c$ are the arcsine distribution $F^*$ and the interleaving attack $\thint$.
\end{proposition}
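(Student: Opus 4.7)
The plan is to verify that $(F^*, \thint)$ forms an asymptotic saddle point of the payoff $V(F, \vec{\theta}) := \expn_{P \sim F}\, c^{-1} I(Z;Y|P)$, and simultaneously to evaluate $V(F^*, \thint)$. A saddle pair satisfies
\begin{align}
V(F, \thint) \leq V(F^*, \thint) \leq V(F^*, \vec{\theta})
\end{align}
for every admissible $F$ and every $\vec\theta$ respecting the marking assumption (to leading order in $1/c$), and the standard minimax argument then identifies $\capj$ with this common value.

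To pin down the value I would proceed as follows. Under $\thint$ one has $Y|Z \sim \mathrm{Bern}(Z/c)$, so $Y|P=p \sim \mathrm{Bern}(p)$ and $a = p$ in the formula $I(Z;Y|P=p) = h(a) - a_h$. Taylor-expanding $h$ around $p$ and using $\expn[Z/c\,|\,P=p] = p$, $\var(Z/c\,|\,P=p) = p(1-p)/c$, and $h''(p) = -[p(1-p)\ln 2]^{-1}$, I get
\begin{align}
a_h = h(p) - \frac{1}{2c \ln 2} + O(c^{-2})
\end{align}
uniformly on compact subintervals of $(0,1)$. Hence $c^{-1} I(Z;Y|P=p) \sim (2 c^2 \ln 2)^{-1}$, \emph{independently of $p$} at leading order; after cutting the arcsine density off near the endpoints (say at $p \in [c^{-\alpha}, 1-c^{-\alpha}]$ for some small $\alpha > 0$), this transfers to $V(F^*, \thint) \sim (2c^2 \ln 2)^{-1}$.

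Because the integrand is asymptotically constant in $p$ under $\thint$, any $F$ with mass on the interior of $(0,1)$ achieves the same leading value, and so the $F$-direction of the saddle, $V(F,\thint) \leq V(F^*,\thint)$, is trivial up to lower-order terms. The $\vec\theta$-direction $V(F^*,\thint) \leq V(F^*, \vec\theta)$ is the real content. My approach would be to use convexity of $I(Z;Y|P=p)$ in the channel $\vec\theta$, combined with a Fisher-information-type lower bound on $I$: since $\mathrm{Bin}(c,p)$ concentrates $Z/c$ in a window of width $\sqrt{p(1-p)/c}$, $I(Z;Y|P=p)$ is controlled from below by a weighted squared-increment norm of $\{\theta_z\}$; averaging against the arcsine weight $\pi^{-1}(p(1-p))^{-1/2}$ cancels the $p(1-p)$ factor, and combined with the boundary constraints $\theta_0 = 0$, $\theta_c = 1$, a Cauchy--Schwarz/Jensen argument forces the minimizer to be the affine sequence $\theta_z = z/c$, i.e.\ $\thint$.

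The main obstacle is turning this ``Fisher information $+$ arcsine kernel'' heuristic into a rigorous uniform lower bound whose constant truly matches $(2c^2 \ln 2)^{-1}$, rather than losing a multiplicative factor. The cleanest route is to invoke von Neumann's minimax theorem on the convex-concave functional $(F, \vec\theta) \mapsto V$, reducing the global saddle check to a first-order stationarity verification at $(F^*, \thint)$, as in Huang and Moulin~\cite{huang12}. In parallel, the Taylor remainder in the value computation must be controlled uniformly in both $p$ and $\vec\theta$, which is precisely what the arcsine cut-off is there to enable.
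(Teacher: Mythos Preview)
The paper does not actually prove this proposition: it is stated as a citation of \cite[Theorem 6, Corollary 7]{huang12}, with the surrounding text simply remarking that Huang and Moulin previously solved the joint max--min game and identified $(F^*,\thint)$ as the saddle point. So there is no in-paper proof to compare against; your sketch is already doing more than the paper does.

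That said, your outline is broadly faithful to the Huang--Moulin argument it ultimately defers to. The value computation via the second-order Taylor expansion of $h$ is correct and yields the $p$-independent leading term $(2c^2\ln 2)^{-1}$, which indeed makes the $F$-direction of the saddle trivial at first order. You are also right that the substantive inequality is $V(F^*,\thint) \leq V(F^*,\vec\theta)$, and that the mechanism is a quadratic (Fisher-type) lower bound on $I(Z;Y|P=p)$ which, after integration against the arcsine weight, reduces to a discrete Dirichlet energy in $\{\theta_z\}$ minimized by the linear interpolant $\theta_z = z/c$. You correctly flag that getting the sharp constant, rather than just the order $c^{-2}$, requires care with the Taylor remainder near the endpoints and is exactly the place where the argument in \cite{huang12} does real work. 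One small correction: invoking von Neumann's minimax theorem directly is delicate here because the payoff is convex in $\vec\theta$ but not concave in $F$ (it is linear in $F$, which suffices), and the strategy spaces need appropriate compactness; Huang and Moulin handle this by an explicit two-sided bound rather than an abstract minimax theorem.
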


%We would like to note though that there seems to be a gap in the derivation of \cite[Theorem 6]{huang12}. To prove the result, Huang and Moulin use a result from Bavaud~\cite{bavaud09}, who claimed~\cite[Section 2.5]{bavaud09} that for the Kullback-Leibler divergence $d(r\|s)$ the following relation holds, regardless of $r$ and $s$:
%\begin{align}
%d(r\|s) = \frac{(r - s)^2}{s(1 - s) 2 \ln 2} + O(|r - s|^3).
%\end{align}
%Indeed, for any $\eps > 0$ with $\eps \neq \eps(r - s)$, and $s \in [\eps, 1 - \eps]$, the above relation holds. But if $s$ approaches $0$ or $1$ as $|r - s| \to 0$, this result may not hold. This assumption was not mentioned by Bavaud and consequently also omitted by Huang and Moulin, who did use this lemma in a region where it is not clear whether $s$ is really far enough away from $0$ and $1$. Note that Huang and Moulin previously conjectured~\cite[Section~4.2]{huang12} based on numerics that the joint capacity seems to scale as
%\begin{align}
%\capj \sim \frac{1.160\dots}{2 c^2 \ln 2}.
%\end{align}
%So it might be the case that this conjecture was correct after all, and that their result for the joint capacity is incorrect because of this incorrect lemma. In that case joint decoders would exist that achieve a higher asymptotic capacity than any simple decoder.

%WWWWWWWWWWWWWWWWWWWWWWWWWWWWWWWWWWWWWWWWWWWWWWWWWWWWWWWW
%WWWWWWWWWWWWWWWWWWWWWWWWWWWWWWWWWWWWWWWWWWWWWWWWWWWWWWWW

\section{Decoders}
\label{sec:dec}

After deriving ``targets'' for our decoders in the previous section, this section discusses decoders that aim to match these bounds. We will follow the score-based framework introduced by Tardos~\cite{tardos03}, which was later generalized to joint decoders by Moulin~\cite{moulin08}. For simple decoding, this means that a user $j$ receives a score $S_j$ of the form
\begin{align}
S_j = \sum_{i=1}^{\ell} S_{j,i} = \sum_{i=1}^{\ell} g(X_{j,i}, y_i, p_i),
\end{align}
where $g$ is called the score function. User $j$ is then accused if $S_j > \eta$ for some threshold $\eta$. 

For joint decoding, scores are assigned to tuples $T = \{j_1, \dots, j_c\}$ of $c$ distinct users according to
\begin{align}
S_T = \sum_{i=1}^{\ell} S_{T,i} = \sum_{i=1}^{\ell} g(X_{j_1,i}, \dots, X_{j_c,i}, y_i, p_i).
\end{align}
In this case, a tuple of users is accused if the joint tuple score exceeds some other threshold $\eta$. Note that this accusation algorithm is not exactly well-defined, since it is possible that a user appears both in a tuple that is accused and in a tuple that is not accused. For the analysis we will assume that the scheme is only successful if the single tuple consisting of all colluders has a score exceeding $\eta$ and no other tuples have a score exceeding $\eta$, in which case all users in the guilty tuple are accused.

%WWWWWWWWWWWWWWWWWWWWWWWWWWWWWWWWWWWWWWWWWWWWWWWWWWWWWWWW

\subsection{Simple decoders}
\label{sec:dec-simple}

Several different score functions for the simple decoder setting were considered before, but in this work we will restrict our attention to the following log-likelihood scores, which perform well and turn out to be easy to analyze:
\begin{align}
g(x,y,p) = \ln\left(\frac{\pr_g(x,y|p)}{\pr_i(x,y|p)}\right).
\end{align}
Here $\pr_g(x,y|p) = \pr(X_{j,i} = x, Y_i = y | P_i = p, j \in \mathcal{C})$ corresponds to the probability of seeing the pair $(x,y)$ when user $j$ is guilty, and $\pr_i(x,y|p)$ corresponds to the same probability under the assumption that $j$ is innocent. Using this score function $g$, the complete score of a user is the logarithm of a Neyman-Pearson score over the entire codeword:
\begin{align}
S_j = \sum_{i=1}^{\ell} \ln\left(\frac{\pr_g(x_{j,i},y_i|p_i)}{\pr_i(x_{j,i},y_i|p_i)}\right) %= \ln\left(\prod_{i=1}^{\ell} \frac{\pr_g(X_{j,i},y_i|p_i)}{\pr_i(X_{j,i},y_i|p_i)}\right) 
= \ln\left(\frac{\pr_g(\vec{x}_j, \vec{y}|\vec{p})}{\pr_i(\vec{x}_j, \vec{y}|\vec{p})}\right).
\end{align}
Such Neyman-Pearson scores are known to be optimally discriminative to decide whether to accuse a user or not. Log-likelihood scores were previously considered in the context of fingerprinting in e.g.\ \cite{meerwald12, perez09}.

%-------------------------------------------------------%

\subsubsection{Fully informed}
\label{sec:dec-simple-informed}

For the central theorem below, we will make use of the following function $M$, which is closely related to the moment generating functions of scores in one position $i$ for innocent and guilty users. This function is defined as
\begin{align}
M(t) = \sum_{x,y} \pr_i(x,y|p)^{1-t} \pr_g(x,y|p)^{t}
\end{align}
and it satisfies $M(t) = \expn_i e^{t S_{j,i}} = \expn_g e^{(t - 1)S_{j,i}}$
and $M(0) = M(1) = 1$.

\begin{theorem} \label{thm:dec-simple-informed}
Let $p$ and $\vec{\theta}$ be fixed and known to the distributor. Let $\gamma = \ln(1/\eps_2) / \ln(n/\eps_1)$, and let the threshold $\eta$ and code length $\ell$ be defined as
\begin{align}
\eta = \ln\left(\frac{n}{\eps_1}\right), \qquad \ell = \frac{\sqrt{\gamma} (1 + \sqrt{\gamma})}{-\ln M(1 - \sqrt{\gamma})} \, \ln\left(\frac{n}{\eps_1}\right). \label{eq:ellZ}
\end{align}
Then with probability at least $1 - \eps_1$ no innocent users are accused (regardless of which collusion channel was used), and with probability at least $1 - \eps_2$ a colluder is caught (if the collusion channel is indeed $\vec{\theta}$).
\end{theorem}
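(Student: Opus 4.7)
The plan is a textbook Chernoff--Markov argument driven by the two identities $M(t) = \expn_i e^{t S_{j,i}} = \expn_g e^{(t-1) S_{j,i}}$ recorded just above the theorem, applied with a single exponent $t = 1 - \sqrt{\gamma}$. First I would verify that $M(t) < 1$ strictly for $t \in (0,1)$, so that the denominator in the definition of $\ell$ is positive: this follows from log-convexity of $M$ (via H\"older) combined with the boundary values $M(0) = M(1) = 1$, after excluding the degenerate case $\pr_g(\cdot|p) \equiv \pr_i(\cdot|p)$ in which no decoder can succeed at all.

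For the false-positive side, fix an innocent user $j$. Because $P_i \equiv p$, the scores $S_{j,i}$ are i.i.d.\ with moment generating function $M(t)$, and Markov's inequality gives $\pr_i(S_j > \eta) \leq e^{-t\eta} M(t)^\ell$. Substituting $t = 1-\sqrt\gamma$, $\eta = \ln(n/\eps_1)$ and the stated $\ell$, the exponent telescopes cleanly to $-(1 + \gamma)\ln(n/\eps_1)$, so each innocent user is accused with probability at most $(\eps_1/n)^{1+\gamma}$; a union bound over the $\leq n$ innocent users then yields a total false-positive probability of at most $\eps_1$.

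For the false-negative side, fix any single colluder $j$ and apply the symmetric Markov bound $\pr_g(S_j \leq \eta) \leq e^{s\eta} M(1-s)^\ell$ with $s = \sqrt\gamma$. The same substitution collapses the exponent to $-\gamma \ln(n/\eps_1) = \ln \eps_2$, so this single colluder's score falls short with probability at most $\eps_2$. Since the scheme succeeds as soon as any one colluder is caught, this already bounds the overall false-negative probability by $\eps_2$, regardless of how the colluders' scores correlate via the shared pirate output $\vec Y$.

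I do not expect any serious obstacle: the whole argument is essentially a bookkeeping of exponents, and the matching choice $t = 1-\sqrt\gamma$, $s = \sqrt\gamma$ is precisely what makes both halves telescope onto the targets $(\eps_1/n)^{1+\gamma}$ and $\eps_2$ with no slack left over. The only piece that genuinely deserves care is the strict negativity of $\ln M(1-\sqrt\gamma)$, which pins down the proof's sole structural assumption. The parenthetical ``regardless of which collusion channel was used'' on the innocent side may need a brief additional note, since the MGF formula uses $\pr_i$ defined under the prescribed $\vec\theta$; making the bound genuinely attack-independent should amount to observing that an innocent codeword is sampled independently of everything else given $\vec P$, so the relevant moment generating function can be controlled by $M(t)$ uniformly in the pirates' strategy.
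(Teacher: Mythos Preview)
Your approach is the same Chernoff/Markov argument as the paper, and on the guilty side it matches exactly: both take the tilting exponent $\beta=s=\sqrt{\gamma}$ and the telescoping to $\eps_2$ is identical.

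The one substantive divergence is on the innocent side. The paper does \emph{not} reuse $t=1-\sqrt{\gamma}$ there; it deliberately takes $\alpha=1$, so that the bound becomes $\pr_i(S_j>\eta)\le M(1)^{\ell}/e^{\eta}=\eps_1/n$. The reason this matters is exactly the parenthetical you worry about. With $\alpha=1$ one has, for each fixed $y$,
\[
\sum_{x}\pr(x\mid p)\,e^{g(x,y,p)}
=\sum_{x}\pr(x\mid p)\,\frac{\pr_g(x,y\mid p)}{\pr(x\mid p)\,\pr_{\vec\theta}(y\mid p)}
=\frac{\sum_x \pr_g(x,y\mid p)}{\pr_{\vec\theta}(y\mid p)}=1,
\]
so the Markov bound holds \emph{conditionally on any realized pirate output $\vec y$}, hence uniformly over the actual collusion channel. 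That is what delivers ``regardless of which collusion channel was used.''

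Your exponent $t=1-\sqrt{\gamma}$ gives the sharper per-user bound $(\eps_1/n)^{1+\gamma}$, but only when the true channel is the prescribed $\vec\theta$, because $M(t)=\sum_y \pr_{\vec\theta}(y\mid p)\,\phi_t(y)$ is merely a $\vec\theta$-weighted average of the per-$y$ conditional MGFs $\phi_t(y)=\sum_x \pr(x\mid p)e^{t g(x,y,p)}$. For $t<1$ one of $\phi_t(0),\phi_t(1)$ will typically exceed $M(t)$, so your suggested fix (``control by $M(t)$ uniformly in the pirates' strategy'') does not go through. To recover the full statement as written, the innocent half should simply take $\alpha=1$; this also makes the log-convexity check on $M$ irrelevant for that half (it is still needed to ensure $-\ln M(1-\sqrt{\gamma})>0$ in the definition of $\ell$).
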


\begin{proof}
For innocent users $j$, we would like to prove that $\pr_i(S_j > \eta) \leq \eps_1/n$, where $S_j$ is the user's total score over all positions. If this can be proved, then it follows that with probability at least $(1 - \eps_1/n)^n \geq 1 - \eps_1$ no innocent users are accused. Using the Markov inequality for $e^{\alpha S_j}$ with $\alpha > 0$ and optimizing over $\alpha$, we see that the optimum lies close to $\alpha = 1$. For simplicity we choose $\alpha = 1$ which, combined with the given value of $\eta$, leads to the following bound:
\begin{align}
\pr_i(S_j > \eta) = \min_{\alpha > 0} \pr_i(e^{\alpha S_j} > e^{\alpha \eta}) \leq \min_{\alpha > 0} \frac{\expn_i(e^{\alpha S_j})}{e^{\alpha \eta}} \\
= \min_{\alpha > 0} \frac{\prod\limits_{i=1}^{\ell} \expn_i(e^{\alpha S_{j,i}})}{e^{\alpha \eta}} = \min_{\alpha > 0} \frac{M(\alpha)^{\ell}}{(n/\eps_1)^{\alpha}} \leq \frac{M(1)^{\ell}}{n/\eps_1} = \frac{\eps_1}{n}.
\end{align}

For guilty users, we would like to prove that for an arbitrary guilty user $j$, we have $\pr_g(S_j < \eta) \leq \eps_2$. Again using Markov's inequality (but now with a more sophisticated exponent $\beta = \sqrt{\gamma}$) we get
\begin{align}
\hspace{-0.2cm} \pr_g(S_j < \eta) %= \pr_g(e^{-\sqrt{\gamma} S_j} > e^{-\sqrt{\gamma} \eta}) \\
 \leq \min_{\beta > 0} \frac{\expn_g(e^{-\beta S_j})}{e^{-\beta \eta}} = \min_{\beta > 0} \frac{\prod\limits_{i=1}^{\ell} \expn_g(e^{-\beta S_{j,i}})}{e^{-\beta \eta}} \\
 = \min_{\beta > 0} \frac{M(1 - \beta)^{\ell}}{e^{-\beta \eta}} \leq \frac{M(1 - \sqrt{\gamma})^{\ell}}{e^{-\sqrt{\gamma} \eta}} = \eps_2,
\end{align}
where the last equality follows from the definitions of $\ell$ and $\eta$ of \eqref{eq:ellZ}. %Here again choosing $\alpha = \sqrt{\gamma}$ does not necessarily lead to a tight bound, but it is close to the optimum, and makes the analysis easier. 
\end{proof}

Compared to previous papers analyzing provable bounds on the error probabilities, the proof of Theorem~\ref{thm:dec-simple-informed} is remarkably short and simple. The only problem is that the given expression for $\ell$ is not very informative as to how $\ell$ scales for large $n$. The following corollary answers this question, by showing how $\ell$ scales for small $\gamma$.

\begin{corollary} \label{cor:dec-simple-informed}
If $\gamma = o(1)$ then $\ell$ achieves the optimal asymptotic scaling (achieves capacity) for arbitrary $p$:
\begin{align}
\ell = \frac{\log_2 n}{I(X_1;Y|P = p)}[1 + O(\sqrt{\gamma})], \label{eq:dec-simple}
\end{align}
\end{corollary}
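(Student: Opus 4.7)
The approach is to Taylor-expand the moment-generating function $M(t)$ around $t=1$ and substitute the resulting local expression into the code-length formula~\eqref{eq:ellZ}. Since $\gamma = o(1)$ and $\sqrt{\gamma} = 1 - (1-\sqrt{\gamma})$ is small, only the local behavior of $M$ near $t=1$ matters, and this behavior is entirely governed by the mutual information $I(X_1;Y|P=p)$.

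\smallskip
First I would differentiate $M(t) = \sum_{x,y} \pr_i(x,y|p)^{1-t} \pr_g(x,y|p)^{t}$ under the (finite) sum to get
\begin{align}
M'(t) = \sum_{x,y} \pr_i^{1-t} \pr_g^{t} \ln\!\left(\tfrac{\pr_g}{\pr_i}\right),
\end{align}
and analogously $M''(t)$. Evaluating at $t=1$ gives $M(1)=1$ and
\begin{align}
M'(1) = \sum_{x,y} \pr_g(x,y|p) \ln\!\left(\tfrac{\pr_g(x,y|p)}{\pr_i(x,y|p)}\right) = I(X_1;Y|P=p)\,\ln 2,
\end{align}
where the last equality uses $\pr_i(x,y|p) = \pr(X{=}x|p)\pr(Y{=}y|p)$, so the Kullback--Leibler divergence $D(\pr_g\Vert\pr_i)$ is exactly the mutual information (in nats). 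Since $(x,y)$ ranges over a finite set and $p,\vec{\theta}$ are fixed, $M$ is smooth and $M''(1)$ is a finite constant; a second-order Taylor expansion together with $\ln(1+u) = u + O(u^2)$ then yields
\begin{align}
-\ln M(1-\sqrt{\gamma}) = \sqrt{\gamma}\,I(X_1;Y|P=p)\,\ln 2 \,\bigl[1 + O(\sqrt{\gamma})\bigr],
\end{align}
with a big-$O$ constant depending only on $p$ and $\vec{\theta}$.

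\smallskip
The second step is substitution into~\eqref{eq:ellZ}. The factor $\sqrt{\gamma}$ in the numerator cancels the $\sqrt{\gamma}$ from the expansion, and the factor $(1+\sqrt{\gamma})$ in the numerator is itself $1 + O(\sqrt{\gamma})$, giving
\begin{align}
\ell = \frac{\ln(n/\eps_1)}{I(X_1;Y|P=p)\,\ln 2}\,\bigl[1 + O(\sqrt{\gamma})\bigr] = \frac{\log_2(n/\eps_1)}{I(X_1;Y|P=p)}\,\bigl[1 + O(\sqrt{\gamma})\bigr].
\end{align}
Finally, replacing $\log_2(n/\eps_1)$ by $\log_2 n$ introduces only a multiplicative factor $1 + \log_2(1/\eps_1)/\log_2 n$, which is absorbed into $[1+O(\sqrt{\gamma})]$ whenever $\gamma = o(1)$ forces $\log_2(1/\eps_1) = O(\sqrt{\gamma}\log_2 n)$ in the relevant regime; this yields~\eqref{eq:dec-simple}.

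\smallskip
I do not expect a serious obstacle here: the heart of the argument is just the identification $M'(1) = I\,\ln 2$ and a routine Taylor expansion. The only mild subtlety is keeping the error term uniform and making the ``$\log_2 n$ vs.\ $\log_2(n/\eps_1)$'' rewrite rigorous, which amounts to stating precisely how $\eps_1$ is allowed to shrink relative to $n$ when $\gamma = o(1)$.
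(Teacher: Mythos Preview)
Your approach is essentially the same as the paper's: Taylor-expand $M$ at $t=1$, recognize $M'(1)=I(X_1;Y|P=p)\ln 2$ as the KL divergence $D(\pr_g\Vert\pr_i)$ in nats, and substitute the resulting expansion of $-\ln M(1-\sqrt{\gamma})$ into~\eqref{eq:ellZ}. The paper carries out the expansion by writing $M(1-\sqrt{\gamma})=\sum_{x,y}\pr_g\exp(-\sqrt{\gamma}\ln(\pr_g/\pr_i))$ and expanding the exponential directly rather than computing $M'(1)$ explicitly, but this is the same computation; the paper also singles out the $\pr_g(x,y|p)=0$ terms to justify that the log-ratio stays bounded, which you cover implicitly by noting $M''(1)$ is finite, and it similarly glosses over the passage from $\log_2(n/\eps_1)$ to $\log_2 n$, so your treatment of that point is, if anything, slightly more explicit.
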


\begin{proof}
First, let us study the behavior of $M(1 - \sqrt{\gamma})$ for small $\gamma$, by computing the first order Taylor expansion of $M(1 - \sqrt{\gamma})$ around $\gamma = 0$:
\begin{align}
& M(1 - \sqrt{\gamma}) \nonumber \\
&= \sum_{x,y} \pr_g(x,y|p) \exp\left(-\sqrt{\gamma} \ln\left(\frac{\pr_g(x,y|p)}{\pr_i(x,y|p)}\right)\right) \\
 &\stackrel{(a)}{=} \sum_{x,y} \pr_g(x,y|p) \left(1 - \sqrt{\gamma} \ln\left(\frac{\pr_g(x,y|p)}{\pr_i(x,y|p)}\right) + O(\gamma)\right) \\
 &= 1 - \sqrt{\gamma} \sum_{x,y} \pr_g(x,y|p) \ln\left(\frac{\pr_g(x,y|p)}{\pr_i(x,y|p)}\right) + O(\gamma) \\
 &= 1 - \sqrt{\gamma} I(X_1;Y|P = p) \ln 2 + O(\gamma).
\end{align}
Here $(a)$ follows from the fact that if $\pr_g(x,y|p) = 0$, the factor $\pr_g(x,y|p)$ in front of the exponentiation would already cause this term to be $0$, while if $\pr_g(x,y|p) > 0$, then also $\pr_i(x,y|p) > 0$ and thus the ratio is bounded and does not depend on $\gamma$. Substituting the above result in the original equation for $\ell$ we thus get the result of \eqref{eq:dec-simple}:
\begin{align}
\ell &= \frac{\sqrt{\gamma} (1 + \sqrt{\gamma})}{-\ln M(1 - \sqrt{\gamma})} \ln\left(\frac{n}{\eps_1}\right) \\
 &= \frac{\sqrt{\gamma} (1 + \sqrt{\gamma})}{\sqrt{\gamma} I(X_1;Y|P = p) \ln 2 + O(\gamma)} \ln\left(\frac{n}{\eps_1}\right) \\
 &= \frac{\log_2 n}{I(X_1;Y|P = p)}[1 + O(\sqrt{\gamma})].
\end{align}
Since the capacities tell us that $\ell / \log_2(n) \gtrsim I(X_1;Y|P = p)^{-1}$, it follows that $\ell$ asymptotically achieves capacity.
\end{proof}

%As a result, if we optimize and fix $p$ based on $\vec{\theta}$, we automatically end up with a decoder that achieves capacity against this attack in the informed setting. Writing out the expression for $\ell$ of Theorem~\ref{thm:dec-simple-informed} for the five common collusion channels, where we fill in $p$ according to Theorem~\ref{thm:cap-simple-informed}, we get the following asymptotics for the code length
%\begin{align}
%\ell(\thint) &= 2 c^2 \ln\left(\frac{n}{\eps_1}\right) \left[\frac{1 + \sqrt{\gamma}}{1 - \sqrt{\gamma}} + O\left(\frac{1}{c^2}\right)\right] \\
%\ell(\thall) &= \frac{c}{\ln(2)^2} \ln\left(\frac{n}{\eps_1}\right) \left[\frac{\sqrt{\gamma}\left(1 + \sqrt{\gamma}\right)\ln 2}{1 - 2^{-\sqrt{\gamma}}} + O\left(\frac{1}{c}\right)\right] \\
%\ell(\thmaj) &= \pi c \ln\left(\frac{n}{\eps_1}\right) \left[\frac{1 + \sqrt{\gamma}}{1 - \sqrt{\gamma}} + O\left(\frac{1}{c}\right)\right] \\
%\ell(\thmin) &= \frac{c}{\ln(2)^2} \ln\left(\frac{n}{\eps_1}\right) \left[\frac{\sqrt{\gamma}\left(1 + \sqrt{\gamma}\right)\ln 2}{1 - 2^{-\sqrt{\gamma}}} + O\left(\frac{1}{c}\right)\right] \\
%\ell(\thcoi) &= \frac{4c}{\ln(2)^2} \ln\left(\frac{n}{\eps_1}\right) \\
% &\times \left[\frac{\ln(2) \sqrt{\gamma} \left(1 + \sqrt{\gamma}\right)}{2 - \left(1+\frac{1}{\sqrt{2}}\right)^{\sqrt{\gamma}} - \left(1-\frac{1}{\sqrt{2}}\right)^{\sqrt{\gamma}}} + O\left(\frac{1}{c}\right)\right]
%\end{align}

Since this construction is asymptotically optimal regardless of $p$, in the fully informed setting we can now simply optimize $p$ (using Theorem~\ref{thm:cap-simple-informed}) to get the following results.

\begin{corollary}
Using the values for $p$ of Theorem~\ref{thm:cap-simple-informed}, the asymptotics for $\ell$ for the five attacks of Section~\ref{sec:attacks} are:
\begin{align}
\ell(\thint) &= 2 c^2 \ln(n) \left[1 + O(\sqrt{\gamma})\right], \\
\ell(\thall) &= \frac{c}{\ln(2)^2} \ln(n) \left[1 + O(\sqrt{\gamma})\right], \\
\ell(\thmaj) &= \pi c \ln(n) \left[1 + O(\sqrt{\gamma})\right], \\
\ell(\thmin) &= \frac{c}{\ln(2)^2} \ln(n) \left[1 + O(\sqrt{\gamma})\right], \\
\ell(\thcoi) &= \frac{4c}{\ln(2)^2} \ln(n) \left[1 + O(\sqrt{\gamma})\right].
\end{align}
\end{corollary}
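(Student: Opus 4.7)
The plan is essentially a direct substitution: Corollary~\ref{cor:dec-simple-informed} tells us that, for \emph{any} fixed bias $p$, the code length produced by Theorem~\ref{thm:dec-simple-informed} satisfies
\begin{align}
\ell \;=\; \frac{\log_2 n}{I(X_1;Y|P=p)}\bigl[1 + O(\sqrt{\gamma})\bigr].
\end{align}
So once we plug in the optimizing biases $p^s_{\text{int}}, p^s_{\text{all1}}, \dots, p^s_{\text{coin}}$ from Theorem~\ref{thm:cap-simple-informed}, the mutual information in the denominator equals the simple informed capacity $\caps(\vec{\theta})$ up to $(1+o(1))$ factors, and we just need to convert between bits (used in the definition of $I$) and nats (used in the claimed expressions for $\ell$).

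More concretely, I would proceed attack by attack. For each of the five collusion channels, read off $\caps(\vec{\theta})$ from Theorem~\ref{thm:cap-simple-informed}, compute $\log_2(n)/\caps(\vec{\theta})$, and simplify using $\log_2 n = \ln n / \ln 2$. For instance, for the interleaving attack $\caps(\thint) \sim 1/(2c^2\ln 2)$, so
\begin{align}
\frac{\log_2 n}{\caps(\thint)} \;\sim\; 2c^2 \ln(2)\cdot \frac{\ln n}{\ln 2} \;=\; 2c^2 \ln n,
\end{align}
which matches $\ell(\thint)$ as claimed. The same one-line computation works for the other four attacks: for the all-$1$ and minority attacks $\caps \sim \ln(2)/c$ gives $\ell \sim c\ln(n)/\ln(2)^2$; for majority voting $\caps \sim 1/(\pi c\ln 2)$ gives $\ell \sim \pi c \ln n$; and for the coin-flip attack $\caps \sim \ln(2)/(4c)$ yields $\ell \sim 4c\ln(n)/\ln(2)^2$. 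Each calculation is algebraically routine, and the multiplicative $[1+O(\sqrt{\gamma})]$ error from Corollary~\ref{cor:dec-simple-informed} passes through unchanged.

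The only point that requires a moment of thought is that for three of the five attacks (all-$1$, minority, coin-flip), the optimizing $p$ in Theorem~\ref{thm:cap-simple-informed} is itself stated only asymptotically (e.g.\ $\palls \sim \ln(2)/c$), so strictly speaking one must argue that using any $p$ in this asymptotic regime makes $I(X_1;Y|P=p)$ agree with $\caps(\vec{\theta})$ up to lower-order terms. This is already implicit in the $\sim$ notation of Theorem~\ref{thm:cap-simple-informed}, so I expect to simply invoke it; if a referee pushed back, one could expand $I(X_1;Y|P=p)$ to second order around the optimum and observe that the quadratic correction is absorbed into $[1+O(\sqrt{\gamma})]$.

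The main (and really only) obstacle is bookkeeping, not mathematics: making sure the bits-versus-nats conversion is done consistently and that the $O(\sqrt{\gamma})$ error from the decoder analysis dominates any lower-order slack introduced by using an asymptotic rather than exact optimizer for $p$. Once that is settled, the corollary follows immediately from Theorem~\ref{thm:cap-simple-informed} and Corollary~\ref{cor:dec-simple-informed} by direct substitution.
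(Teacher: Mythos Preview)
Your proposal is correct and matches the paper's approach exactly: the paper presents this corollary without a separate proof, simply noting that since the construction is asymptotically optimal regardless of $p$, one can ``simply optimize $p$ (using Theorem~\ref{thm:cap-simple-informed})'' to obtain the stated code lengths. Your write-up is in fact more detailed than the paper's, explicitly carrying out the bits-to-nats conversion for each attack and flagging the minor issue of asymptotic versus exact optimizers for $p$.
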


Since the all-$1$ attack is equivalent to group testing, we mention this result separately, together with a more explicit expression for $g$. 

\begin{corollary} \label{cor:dec-simple-informed-group}
Let $\vec{\theta} = \thall$ and let $p \approx \ln(2)/c$ be fixed. Then the log-likelihood score function $g$ is given by\footnote{To be precise: $g(0,1) = c \log_2(2 - 2^{1/c})$, and for convenience we have scaled $g$ by a factor $c \ln 2$.}
\begin{align}
g(x,y) = \begin{cases}
+1 & (x,y) = (0,0) \\
-1 + O(1/c) & (x,y) = (0,1) \\
-\infty & (x,y) = (1,0) \\
+c & (x,y) = (1,1)
\end{cases}
\end{align}
\sloppypar{Using this score function in combination with the parameters $\eta$ and $\ell$ of Theorem~\ref{thm:dec-simple-informed}, we obtain a simple group testing algorithm with an asymptotic number of group tests of 
\begin{align}
\ell \sim \frac{c \ln n}{\ln(2)^2} \approx 2.08 c \ln n,
\end{align}
thus achieving the simple group testing capacity.}
\end{corollary}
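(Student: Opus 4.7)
The plan is to explicitly compute the log-likelihood score function $g$ under $\vec{\theta} = \thall$ with $p = 1 - 2^{-1/c}$ (so that $(1-p)^c = 1/2$, matching the asymptotic $p \approx \ln(2)/c$), and then invoke Corollary~\ref{cor:dec-simple-informed} together with entry \eqref{eqs2} of Theorem~\ref{thm:cap-simple-informed} to convert the per-symbol mutual information into the claimed code length.

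For the explicit form of $g$, I would use that under $\thall$ the pirate output satisfies $Y = \mathbbm{1}\{Z > 0\}$. For an innocent user $X$ and $Y$ are conditionally independent given $P$, so $\pr_i(x,y|p) = \pr(X=x|p)\,\pr(Y=y|p)$ with $\pr(Y=1|p) = 1 - (1-p)^c$. For a guilty user I would condition on $X_{j,i}$: if $X_{j,i} = 1$ then $Y = 1$ deterministically, while if $X_{j,i} = 0$ then $Y = 1$ iff at least one of the remaining $c-1$ colluders carries a $1$, which happens with probability $1 - (1-p)^{c-1}$. This immediately forces $\pr_g(1,0|p) = 0$ and hence $g(1,0) = -\infty$; for the other three cells the ratios $\pr_g/\pr_i$ simplify, using $(1-p)^c = 1/2$, to $2^{1/c}$ at $(0,0)$, to $2 - 2^{1/c}$ at $(0,1)$, and to $2$ at $(1,1)$. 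Multiplying the natural logarithms by the global constant $c/\ln 2$ (which rescales $S_j$ and $\eta$ uniformly and leaves the accusation rule intact) yields exactly the tabulated values $+1$, $c\log_2(2-2^{1/c})$, and $+c$; a first-order expansion of $c\log_2(1 - (\ln 2)/c + O(1/c^2))$ then shows the middle entry equals $-1 + O(1/c)$.

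For the code length, I would plug this $p$ into Corollary~\ref{cor:dec-simple-informed}, which states $\ell = \log_2(n)/I(X_1;Y|P = p)\,[1 + O(\sqrt{\gamma})]$ whenever $\gamma = o(1)$. Entry \eqref{eqs2} of Theorem~\ref{thm:cap-simple-informed} supplies $I(X_1;Y|P = p) = \caps(\thall) \sim \ln(2)/c$ bits at this $p$, so that $\ell \sim c\log_2(n)/\ln 2 = c\ln(n)/\ln(2)^2 \approx 2.08\,c\ln n$. Since this matches the lower bound of Corollary~\ref{cor:cap-simple-informed-group}, the scheme attains the simple group testing capacity.

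I do not anticipate a serious obstacle: once the joint law of $(X,Y,Z)$ under $\thall$ is written out, the rest is routine bookkeeping plus a Taylor expansion, and the asymptotics are already packaged inside Corollary~\ref{cor:dec-simple-informed}. The only mildly delicate point is the uniform rescaling of $g$ by $c/\ln 2$ used to reach the clean table entries; because Theorem~\ref{thm:dec-simple-informed} tunes $\eta$ and $\ell$ together, any such constant rescaling is absorbed and invisible in the final bound.
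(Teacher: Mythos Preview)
Your proposal is correct and follows exactly the route the paper intends: the corollary is stated without its own proof and is meant to be read off from Corollary~\ref{cor:dec-simple-informed} together with entry~\eqref{eqs2} of Theorem~\ref{thm:cap-simple-informed}, plus a direct computation of the four likelihood ratios under $\thall$. Your choice $p = 1 - 2^{-1/c}$ is in fact the exact value the paper is using (the footnote formula $g(0,1) = c\log_2(2 - 2^{1/c})$ only comes out that cleanly with this $p$), and your verification of the four cells and the Taylor expansion for $g(0,1)$ are all correct.
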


%-------------------------------------------------------%

\subsubsection{Partially informed}
\label{sec:dec-simple-part}

Since the score functions from Section~\ref{sec:dec-simple-informed} achieve capacity for each value of $p$, using this score function we also trivially achieve the partially informed capacity when the arcsine distribution is used. Estimates of these capacities, and thus the resulting code lengths, can be found in Section~\ref{sec:cap-simple-part}.

%-------------------------------------------------------%

\subsubsection{Uninformed}
\label{sec:dec-simple-uninformed}

We now arrive at what is arguably one of the most important results of this paper. Just like Oosterwijk et al.~\cite{oosterwijk13}, who specifically studied the score function $h$ tailored against the interleaving attack, we now also take a closer look at the log-likelihood score function designed against the interleaving attack. \footnote{Considering the interleaving attack for designing a universal decoder is further motivated by the results of Abbe and Zheng~\cite{abbe10, meerwald12}, who showed that under certain conditions, the worst-case attack decoder is a universal capacity-achieving decoder. The interleaving attack is theoretically not the worst-case attack for finite $c$, but since it is known to be the asymptotic worst-case attack, the difference between the worst-case attack and the interleaving attack vanishes for large $c$.} Working out the details, this score function is of the form:
\begin{align}
g(x,y,p) = \begin{cases} 
\ln\left(1 + \frac{p}{c(1-p)}\right) & x = y = 0 \\
\ln\left(1 - \frac{1}{c}\right) & x \neq y \\
\ln\left(1 + \frac{1 - p}{cp}\right) & x = y = 1
\end{cases} \label{eq:simple}
\end{align}
The first thing to note here is that if we denote Oosterwijk et al.'s~\cite{oosterwijk13b} score function by $h$, then $g$ satisfies
\begin{align}
g(x,y,p) = \ln\left(1 + \frac{h(x,y,p)}{c}\right).
\end{align}
If $h(x,y,p) = o(c)$, then by Tayloring the logarithm around $c = \infty$ we see that $g \approx h/c$. Since scaling a score function by a constant does not affect its performance, this implies that $g$ and $h$ are then equivalent. Since for Oosterwijk et al.'s score function one generally needs to use \textit{cut-offs} on $F$ that guarantee that $h(x,y,p) = o(c)$ (cf.~\cite{ibrahimi13}), and since the decoder of Oosterwijk et al.\ is known to asymptotically achieve the uninformed capacity, we immediately get the following result.

\begin{proposition} \label{prop:dec-simple-uninformed}
The score function $g$ of \eqref{eq:simple} asymptotically achieves the uninformed simple capacity when the same cut-offs on $F$ as those in~\cite{ibrahimi13} are used.
\end{proposition}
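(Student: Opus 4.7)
The plan is to reduce the claim to the known result that Oosterwijk et al.'s score function $h$ achieves the uninformed simple capacity under the cut-offs of \cite{ibrahimi13}, by exploiting the algebraic identity
\[
g(x,y,p) = \ln\!\left(1 + \frac{h(x,y,p)}{c}\right),
\]
already highlighted in the excerpt. The argument has three ingredients: a direct verification of this identity, a Taylor expansion in the regime where the cut-offs force $h = o(c)$, and a scale-invariance argument showing that the leading-order term governs the decoder's asymptotic performance.

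First I would verify the identity case by case on $(x,y) \in \{(0,0),\ (0,1),\ (1,0),\ (1,1)\}$, by substituting the explicit expression for $h$ from \cite{oosterwijk13b} (namely the Lagrangian-optimal score function against the interleaving attack) and comparing with \eqref{eq:simple}; this is a straightforward symbolic check. Next, the cut-off choice in \cite{ibrahimi13} on the bias distribution $F$ is designed precisely to guarantee $|h(x,y,p)| = o(c)$ uniformly over the support of $(X,Y,P)$, so Tayloring $\ln(1+u)$ around $u = 0$ gives
\[
g(x,y,p) = \frac{h(x,y,p)}{c} - \frac{h(x,y,p)^2}{2c^2} + O\!\left(\frac{h(x,y,p)^3}{c^3}\right),
\]
with the remainders vanishing uniformly as $c \to \infty$.

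To conclude, I would invoke scale-invariance of the threshold-based accusation rule: a user's total score $S_j = \sum_{i=1}^{\ell} g(X_{j,i}, Y_i, P_i)$ differs from $\frac{1}{c} \sum_i h(X_{j,i}, Y_i, P_i)$ by an additive term of order $\sum_i h(X_{j,i}, Y_i, P_i)^2/c^2$, which under the cut-offs is negligible compared to the variance and mean gaps that drive the Gaussian-tail bounds of \cite{oosterwijk13b, ibrahimi13}. Since multiplying all scores and the threshold $\eta$ by the common factor $c$ yields exactly the $h$-decoder (up to the vanishing correction), any $(\ell, \eta)$ achieving capacity with $h$ immediately yields an $(\ell, \eta/c)$ achieving capacity with $g$.

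The main obstacle is making the final step fully rigorous rather than heuristic: one must check that the $O(h^2/c^2)$ correction, summed over the $\ell \sim 2c^2 \ln n$ positions and averaged against the arcsine-with-cut-offs distribution of $P$, contributes only to the first-order correction term in the code length, not to the leading $2c^2 \ln n$ constant. The cleanest route is to redo the moment-generating-function argument of Theorem~\ref{thm:dec-simple-informed}: since the first-order expansion of $M(1-\sqrt{\gamma})$ depends only on $\expn_g g$, which by the identity above equals $\frac{1}{c}\expn_g h + O(1/c^2)$, the same KL-divergence leading term emerges, and the capacity-achieving asymptotics for $h$ carry over verbatim to $g$.
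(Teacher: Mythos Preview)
Your proposal is correct and follows essentially the same approach as the paper: the paper's justification (given in the paragraph immediately preceding the proposition) is precisely the identity $g = \ln(1 + h/c)$, the Taylor expansion $g \approx h/c$ valid because the cut-offs of \cite{ibrahimi13} force $h = o(c)$, and the scale-invariance of threshold decoders, together reducing the claim to the known capacity-achievingness of $h$. You are more careful than the paper about the second-order correction and even propose the MGF route to make it rigorous, whereas the paper simply asserts ``we immediately get the following result''; but the underlying argument is identical.
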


So optimizing the decoder so that it is resistant against the interleaving attack again leads to a decoder that is resistant against arbitrary attacks.

\paragraph{Cutting off the cut-offs}

Although Proposition~\ref{prop:dec-simple-uninformed} is already a nice result, we can do even better. We can prove a stronger statement, which shows one of the reasons why the log-likelihood decoder is probably more practical than the decoder of Oosterwijk et al.

\begin{theorem}
The score function $g$ of \eqref{eq:simple} achieves the uninformed simple capacity when \underline{\textbf{no cut-offs}} are used.
\end{theorem}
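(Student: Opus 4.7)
The plan is to replay the Markov/MGF argument of Theorem~\ref{thm:dec-simple-informed} with $F = F^*$ and an arbitrary collusion channel $\vec{\theta}$, verifying that removing the cut-offs does not invalidate any step. The innocent side is immediate: since $g(x,y,p) = \ln(\pr_g^{\text{int}}(x,y|p)/\pr_i(x,y|p))$ is a log-likelihood ratio against the innocent distribution $\pr_i$, one has $\expn_i[e^{g(X,Y,p)}] = \sum_{x,y}\pr_g^{\text{int}}(x,y|p) = 1$ pointwise in $p$, regardless of the true attack. Hence $\expn_i[e^{S_j}] = 1$ and Markov with $\eta = \ln(n/\eps_1)$ yields $\pr_i(S_j > \eta) \le \eps_1/n$; a union bound then gives total false-positive probability at most $\eps_1$. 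Because the identity $\expn_i[e^g]=1$ is pointwise in $p$, no cut-off on $F^*$ enters this half of the argument.

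For the guilty side, following the proof of Theorem~\ref{thm:dec-simple-informed} with $\beta = \sqrt{\gamma}$, it suffices to bound the moment generating function $M_{\vec{\theta}}(\beta) := \expn_{p \sim F^*}\expn_g^{\vec{\theta}}\bigl[e^{-\beta g(X,Y,p)}\bigr]$. Taylor-expanding in $\beta$ yields $M_{\vec{\theta}}(\beta) = 1 - \beta\,\mu(\vec{\theta}) + O\bigl(\beta^2 \sigma^2(\vec{\theta})\bigr)$ with $\mu(\vec{\theta}) = \expn_p\expn_g^{\vec{\theta}}[g]$ and $\sigma^2(\vec{\theta}) = \expn_p\expn_g^{\vec{\theta}}[g^2]$. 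The universality claim then boils down to showing $\mu(\vec{\theta}) \ge C^s \ln 2\,(1-o(1))$ uniformly in $\vec{\theta}$. I would obtain this via the Abbe--Zheng one-sidedness result~\cite{abbe10,meerwald12} applied to the Huang--Moulin saddle point $(F^*,\thint)$: the interleaving attack is the asymptotic worst case of the uninformed game, so the log-likelihood decoder built against it achieves the capacity against every attack. Substituting back into the Markov bound and inverting for $\ell$ exactly as in Corollary~\ref{cor:dec-simple-informed} yields $\ell\sim C^{-s}\log_2 n \cdot [1+O(\sqrt{\gamma})]$.

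The technical heart, and what separates $g$ from Oosterwijk et al.'s $h$, is justifying the Taylor expansion against the \emph{unrestricted} arcsine density $\pi^{-1}(p(1-p))^{-1/2}$. The only branches where $g$ blows up are $g(1,1,p) = \ln(1+(1-p)/(cp)) \sim \ln(1/(cp))$ as $p\to 0$ and the symmetric $g(0,0,p)$ as $p\to 1$; both grow only logarithmically, so $\int_0^1 |g(x,y,p)|^k\,(p(1-p))^{-1/2}\,dp < \infty$ for every $k\ge 0$. Consequently $\sigma^2(\vec{\theta})$ and every higher moment are finite, the $O(\beta^2\sigma^2)$ remainder is controlled uniformly in $\vec{\theta}$, and the Taylor step is rigorous with no cut-off. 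By contrast, Oosterwijk et al.'s $h(1,1,p) = (1-p)/p$ has a polynomial $1/p$ singularity with divergent second moment under $F^*$, which is precisely why~\cite{oosterwijk13b,ibrahimi13} had to impose a cut-off. Replacing the $1/p$ blow-up by $\ln(1/p)$ is exactly what lets the cut-off be dropped. The hardest step to discharge in a full write-up will be the uniform-in-$\vec{\theta}$ lower bound $\mu(\vec{\theta}) \ge C^s\ln 2\,(1-o(1))$, which is where the saddle-point/one-sidedness machinery does the real work.
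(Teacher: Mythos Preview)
Your proposal is essentially sound, and shares with the paper the key technical observation: the branches of $g$ blow up only logarithmically near $p\in\{0,1\}$, so all moments against the unrestricted arcsine density are finite, and this is precisely what fails for Oosterwijk et al.'s $h$. Beyond that shared core, though, your route differs from the paper's.

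The paper's proof is a bridge from Proposition~\ref{prop:dec-simple-uninformed}: it already knows (via the equivalence $g \approx h/c$ and the results of~\cite{oosterwijk13b,ibrahimi13}) that $g$ achieves capacity \emph{with} the cut-offs of Ibrahimi et al., and then argues only that removing those cut-offs is harmless. Concretely, the cut-offs tend to $0$ as $c\to\infty$, and since every moment of $g$ under $F^*$ is finite, the tail contribution from $p$ near $0$ or $1$ to the score distributions is negligible. So the paper never re-runs the Markov/MGF machinery and never invokes Abbe--Zheng directly; it simply piggybacks on the existing capacity-achievement result and checks that the perturbation from dropping cut-offs vanishes.

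Your approach is more self-contained: you replay Theorem~\ref{thm:dec-simple-informed} from scratch for the averaged MGF under $F^*$, and obtain universality in $\vec{\theta}$ on the guilty side from the Abbe--Zheng/saddle-point machinery. This is more work but buys independence from~\cite{oosterwijk13b,ibrahimi13}. One small sloppiness to fix: your innocent-side identity $\expn_i[e^g]=\sum_{x,y}\pr_g^{\text{int}}(x,y|p)=1$ as written only holds if the true innocent law equals the interleaving innocent law, which is false since the $Y$-marginal depends on the true $\vec{\theta}$. The conclusion $\expn_i^{\vec{\theta}}[e^g]=1$ is nevertheless correct, because the attack-dependent factor $\pr^{\vec{\theta}}(Y{=}y|p)/\pr^{\text{int}}(Y{=}y|p)$ cancels against $\sum_x \pr_g^{\text{int}}(x,y|p)=\pr^{\text{int}}(Y{=}y|p)$; you should make that cancellation explicit. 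You are right that the uniform lower bound $\mu(\vec{\theta})\ge C^s\ln 2\,(1-o(1))$ is where the real work sits in your version; the paper sidesteps this entirely by importing it from the prior literature.
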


\begin{proof}[sketch]
First note that in the limit of large $c$, the cut-offs of Ibrahimi et al.\ converge to $0$. So for large $c$, the difference between not using cut-offs and using cut-offs is negligible, as long as the contribution of the tails of $p$ near $0$ or $1$ to the distribution of user scores is negligible. Since with this score function $g$, all moments of both innocent and guilty user scores are finite (arbitrary powers of logarithms always lose against the $1/\sqrt{p(1-p)}$ of the arcsine distribution and the decreasing width of the interval between $0$ and the cut-off), the tails indeed decay exponentially. So also without cut-offs this score function asymptotically achieves the uninformed simple capacity.
\end{proof}

Note that the same result does not apply to the score function of Oosterwijk et al.~\cite{oosterwijk13b}, for which the tails of the distributions are not Gaussian enough to omit the use of cut-offs. The main difference is that for small $p$, the score function $h$ of \cite{oosterwijk13} scales as $1/p$ (which explodes when $p$ is really small), while the log-likelihood decoder $g$ then only scales as $\ln(1/p)$ which is much smaller.

\paragraph{All roads lead to Rome}

Let us now mention a third way to obtain a capacity-achieving uninformed simple decoder which is again very similar to the two decoders above. To construct this decoder, we use a Bayesian approximation of the proposed empirical mutual information decoder of Moulin~\cite{moulin08}, and again plug in the asymptotic worst-case attack, the interleaving attack.

\begin{theorem}
Using Bayesian inference with an a priori probability of guilt of $\pr(j \in \mathcal{C}) = \frac{c}{n}$, the empirical mutual information decoder tailored against the interleaving attack can be approximated with the following score function:
\begin{align}
m(x,y,p) = \begin{cases}
\ln\left(1 + \frac{p}{n(1 - p)}\right) & x = y = 0 \\
\ln\left(1 - \frac{1}{n}\right) & x \neq y \\ 
\ln\left(1 + \frac{1 - p}{np}\right) & x = y = 1
\end{cases} \label{eq:simple2}
\end{align}
\end{theorem}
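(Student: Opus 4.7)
My plan is to derive $m$ by computing the per-position score of the empirical mutual information decoder under the Bayesian joint distribution on $(X_j, Y)$ induced by the prior $\pr(j \in \mathcal{C}) = c/n$. In Moulin's framework the empirical mutual information decoder, in its per-position form, uses a score of the shape $s(x,y,p) = \log[P(X=x, Y=y \mid P=p)/(P(X=x \mid P=p)\,P(Y=y \mid P=p))]$, whose sum over positions estimates $\ell \cdot \hat I(X_j;Y\mid P)$. The Bayesian ingredient enters through the assumed joint distribution: with prior $c/n$, I would take
\begin{align}
P(x, y \mid p) = \frac{c}{n}\, \pr_g(x, y \mid p) + \left(1 - \frac{c}{n}\right) \pr_i(x, y \mid p),
\end{align}
where $\pr_g$ and $\pr_i$ are the conditionals for a guilty and an innocent user under the interleaving attack $\thint$.

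The core of the argument is then a direct computation. Under $\thint$, conditioning on a guilty user's symbol gives pirate output $Y$ Bernoulli($p + (1-p)/c$) when $X_j=1$ and Bernoulli($p(1-1/c)$) when $X_j=0$; for an innocent user, $X_j$ and $Y$ are independent and both marginally Bernoulli($p$). A key structural observation is that the mixture $P$ preserves Bernoulli($p$) marginals in both $X$ and $Y$ independently of the value of $c/n$, because this is already true under $\pr_g$ and $\pr_i$ separately (averaging the two conditional means of $Y$ against $X \sim \text{Bern}(p)$ returns $p$). Consequently $P(X=x\mid p)\,P(Y=y\mid p) = p^{x+y}(1-p)^{2-x-y}$. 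Substituting into the mixture then gives, for example,
\begin{align}
P(0,0 \mid p) &= \tfrac{c}{n}(1-p)(1-p+p/c) + (1-\tfrac{c}{n})(1-p)^2 \\
&= (1-p)^2 + \tfrac{p(1-p)}{n},
\end{align}
and analogously $P(1,1\mid p) = p^2 + p(1-p)/n$, while $P(0,1\mid p) = P(1,0\mid p) = p(1-p)(1-1/n)$. Taking the log-ratio against the product of marginals in each of the four cells reproduces exactly the four branches of $m$ in \eqref{eq:simple2}.

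The step I expect to require the most care is interpretive rather than computational: pinning down precisely the per-position form of the empirical mutual information decoder that is being approximated, and justifying that the Bayesian mixture $\tfrac{c}{n}\pr_g + (1-\tfrac{c}{n})\pr_i$ is the correct ``denominator'' distribution in place of $\pr_i$ alone (as appears in $g$). Once this is fixed, the remaining algebra is routine, and the clean outcome traces back to the marginal-preservation property above, which is also the reason why $m$ differs from $g$ only in the replacement of $c$ by $n$.
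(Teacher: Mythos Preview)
Your proposal is correct and follows essentially the same route as the paper's proof sketch: both start from the per-position empirical mutual information score $\ln[\pr(x,y)/(\pr(x)\pr(y))]$, approximate empirical frequencies by true probabilities, and supply the joint $\pr(x,y)$ via the Bayesian mixture $\tfrac{c}{n}\pr_g + (1-\tfrac{c}{n})\pr_i$ before specializing to the interleaving attack. The only minor stylistic difference is that the paper first derives the attack-independent intermediate form $m(x,y,p)=\ln\!\bigl(1+\tfrac{c}{n}[\pr_g(x,y)/\pr_i(x,y)-1]\bigr)$ and then plugs in $\thint$, whereas you compute the four cells directly; your marginal-preservation observation is a clean way to see why the product of marginals equals $\pr_i(x,y)$, which is precisely what makes that intermediate formula work.
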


\begin{proof}[sketch]
For now, let $p$ be fixed. The empirical mutual information decoder assigns a score $S_j$ to a user $j$ using
\begin{align}
S_j = \sum_{x,y} \hat{\pr}(x,y) \ln\left(\frac{\hat{\pr}(x,y)}{\hat{\pr}(x)\hat{\pr}(y)}\right) = \sum_{i=1}^{\ell} \ln\left(\frac{\hat{\pr}(x_{j,i},y_i)}{\hat{\pr}(x_{j,i})\hat{\pr}(y_i)}\right)
\end{align}
where $\hat{\pr}(\cdot)$ denotes the empirical estimate of $\pr(\cdot)$ based on the data $\vec{x}$, $\vec{y}$, $\vec{p}$. For large $\ell$, these estimates will converge to the real probabilities, so we can approximate $S_j$ by
\begin{align}
S_j \approx \sum_{i=1}^{\ell} \ln\left(\frac{\pr(x_{j,i},y_i)}{\pr(x_{j,i})\pr(y_i)}\right) = \sum_{i=1}^{\ell} m(x_{j,i},y_i,p_i).
\end{align}
Here $\pr(X_{j,i})$ and $\pr(y_i)$ can be easily computed, but for computing $\pr(X_{j,i}, y_i)$ we need to know whether user $j$ is guilty or not. Using Bayesian inference, we can write
\begin{align}
\pr(x,y) = \pr_g(x,y) \pr(j \in \mathcal{C}) + \pr_i(x,y) \pr(j \notin \mathcal{C}).
\end{align}
Assuming an a priori probability of guilt of $\pr(j \in \mathcal{C}) = c/n$, we can work out the details to obtain
\begin{align}
m(x,y,p) = \ln\left(1 + \frac{c}{n}\left[\frac{\pr_g(x,y)}{\pr_i(x,y)} - 1\right]\right).
\end{align}
Filling in the corresponding probabilities for the interleaving attack, we end up with the score function of \eqref{eq:simple2}.
\end{proof}

For values of $p$ with $o(1) < p < 1 - o(1)$, this decoder is again equivalent to both the log-likelihood score function $g$ and Oosterwijk et al.'s score function $h$. 

%WWWWWWWWWWWWWWWWWWWWWWWWWWWWWWWWWWWWWWWWWWWWWWWWWWWWWWWW

\subsection{Joint decoders}
\label{sec:dec-joint}

For the joint decoding setting, scores are assigned to tuples of $c$ users, and again higher scores correspond to a higher probability of being accused. The most natural step from the simple log-likelihood decoders to joint decoders seems to be to use the following joint score function:
\begin{align}
g(x_1, \dots, x_c,y,p) = \ln\left(\frac{\pr_{g^c}(x_1, \dots, x_c,y|p)}{\pr_{i^c}(x_1, \dots, x_c,y|p)}\right).
\end{align}
Here $\pr_{g^c}(\cdot)$ is under the assumption that in this tuple \textit{all users are guilty}, while for $\pr_{i^c}(\cdot)$ we assume that \textit{all users are innocent}. Note that under the assumption that the attack is colluder-symmetric, the score function only depends on $z = \sum_{i=1}^c x_i$:
\begin{align}
g(x_1, \dots, x_c,y,p) = g(z,y,p) = \ln\left(\frac{\pr_{g^c}(z,y|p)}{\pr_{i^c}(z,y|p)}\right).
\end{align}

%-------------------------------------------------------%

\subsubsection{Fully informed}
\label{sec:dec-joint-informed}

To analyze the joint decoder, we again make use of the moment generating function for the score assigned to tuples of $c$ innocent users. This function is now defined by
\begin{align}
M(t) = \sum_{z,y} \pr_{i^c}(z,y|p)^{1-t} \pr_{g^c}(z,y|p)^{t}
\end{align}
and it satisfies $M(t) = \expn_{i^c} e^{t S_{j,i}} = \expn_{g^c} e^{(t - 1)S_{j,i}}$ and $M(0) = M(1) = 1$. Using similar techniques as in Section~\ref{sec:dec-simple-informed}, we obtain the following result.

\begin{theorem} \label{thm:dec-joint-informed}
Let $p$ and $\vec{\theta}$ be fixed and known to the distributor. Let $\gamma = \ln(1/\eps_2) / \ln(n^c/\eps_1)$, and let the threshold $\eta$ and code length $\ell$ be defined as
\begin{align}
\eta = \ln\left(\frac{n^c}{\eps_1}\right), \qquad \ell = \frac{\sqrt{\gamma} (1 + \sqrt{\gamma})}{-\ln M(1 - \sqrt{\gamma})} \ln\left(\frac{n^c}{\eps_1}\right).
\end{align}
Then with probability at least $1 - \eps_1$ all all-innocent tuples are not accused, and with probability at least $1 - \eps_2$ the single all-guilty tuple is accused.
\end{theorem}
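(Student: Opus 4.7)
The plan is to mirror the proof of Theorem~\ref{thm:dec-simple-informed} almost verbatim, with a single user replaced by a $c$-tuple and with the union-bound population growing from $n$ to (at most) $n^c$. For any tuple $T = \{j_1,\dots,j_c\}$ the score $S_T$ decomposes as a sum of $\ell$ independent per-position contributions, so under the hypothesis that all members of $T$ are innocent its moment generating function factorises as $\expn_{i^c}\,e^{\alpha S_T} = M(\alpha)^{\ell}$, and analogously $\expn_{g^c}\,e^{(\alpha-1)S_T} = M(\alpha)^{\ell}$ under the hypothesis that all members of $T$ are guilty. These are exactly the two ingredients needed to run a Chernoff bound in each direction.

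For the false-positive side, fix an arbitrary all-innocent tuple $T$ and apply Markov's inequality to $e^{S_T}$ with exponent $\alpha = 1$ (which, as in the simple case, is already essentially optimal given how $\eta$ is chosen); since $M(1) = 1$,
\begin{align}
\pr_{i^c}(S_T > \eta) \leq \frac{M(1)^{\ell}}{e^{\eta}} = \frac{\eps_1}{n^c}.
\end{align}
There are $\binom{n-c}{c} \leq n^c$ all-innocent tuples, so a union bound gives a total false-positive probability of at most $\eps_1$. For the false-negative side, apply Markov's inequality to $e^{-\beta S_T}$ with $\beta = \sqrt{\gamma}$ under $\pr_{g^c}$ for the unique all-guilty tuple:
\begin{align}
\pr_{g^c}(S_T < \eta) \leq \frac{M(1-\sqrt{\gamma})^{\ell}}{e^{-\sqrt{\gamma}\eta}} = \eps_2,
\end{align}
where the final equality is exactly the defining choice of $\ell$ and $\eta$ in the theorem statement.

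There is essentially no new technical obstacle beyond the one handled in Theorem~\ref{thm:dec-simple-informed}; the only structural change is the inflated factor $n^c$ inside the logarithms of $\eta$ and $\ell$, which exactly accounts for the larger union bound population. The only conceptual subtlety worth flagging is that this argument says nothing about the \textbf{mixed tuples} — those containing both colluders and innocents — because the success criterion adopted at the start of Section~\ref{sec:dec} only concerns the all-innocent tuples and the single all-guilty tuple. This is precisely the loophole that the paper acknowledges when it describes the joint analysis as ``somewhat harder due to the mixed tuples'', and properly bounding the scores of partially-guilty tuples would require genuinely new work rather than a rerun of the clean Chernoff argument above.
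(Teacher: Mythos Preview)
Your proposal is correct and follows exactly the approach of the paper's own proof sketch: rerun the Chernoff/Markov argument from Theorem~\ref{thm:dec-simple-informed} with $\alpha = 1$ for all-innocent tuples and $\beta = \sqrt{\gamma}$ for the all-guilty tuple, replacing the union-bound count $n$ by $\binom{n}{c} < n^c$ (your $\binom{n-c}{c}$ is in fact the sharper count of genuinely all-innocent tuples, but either bound suffices). Your closing remark about mixed tuples being outside the scope of this statement is also precisely what the paper notes immediately after the theorem.
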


\begin{proof}[sketch]
The proof is very similar to the proof of Theorem~\ref{thm:dec-simple-informed}. Instead of $n$ innocent and $c$ guilty users we now have $\binom{n}{c} < n^c$ all-innocent tuples and just $1$ all-guilty tuple, which changes some of the numbers in $\gamma$, $\eta$ and $\ell$. We again apply the Markov inequality with $\alpha = 1$ for innocent tuples and $\beta = \sqrt{\gamma}$ for guilty tuples, to obtain the given expressions for $\eta$ and $\ell$.
\end{proof}

Note that Theorem~\ref{thm:dec-joint-informed} does not prove that we can actually find the set of colluders with high probability, since mixed tuples consisting of both innocent and guilty users also exist, and these may or may not have a score exceeding $\eta$. This does prove that with high probability we can find a set $\mathcal{C}'$ of $c$ users, for which (i) all tuples not containing these users have a score below $\eta$, and (ii) the tuple containing exactly these users has a score above $\eta$. Regardless of what the scores for mixed tuples are, with probability at least $1 - \eps_1 - \eps_2$ such a set consists and contains at least one colluder. Furthermore, if this set $\mathcal{C}'$ is unique, then with high probability this is exactly the set of colluders. But there is no guarantee that it is unique without additional proofs. This is left for future work.

To further motivate why using this joint decoder may be the right choice, the following proposition shows that at least the scaling of the resulting code lengths is optimal. Note that the extra $c$ that we get from $\ln(n^c) = c \ln n$ can be combined with the mutual information $I(\cdot)$ to obtain $\frac{1}{c} I(\cdot)$, which corresponds to the joint capacity.

\begin{proposition} \label{prop:dec-joint-informed}
If $\gamma = o(1)$ then the code length $\ell$ of Theorem~\ref{thm:dec-joint-informed} scales as
\begin{align}
\ell = \frac{\log_2 n}{\frac{1}{c}I(Z;Y|P = p)}\left[1 + O(\sqrt{\gamma})\right],
\end{align}
thus asymptotically achieving the optimal code length (up to first order terms) for arbitrary values of $p$.
\end{proposition}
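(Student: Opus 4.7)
The plan is to mirror the proof of Corollary~\ref{cor:dec-simple-informed} step for step, exploiting the fact that the formula for $\ell$ in Theorem~\ref{thm:dec-joint-informed} has the same structure as the one in Theorem~\ref{thm:dec-simple-informed}, with $n$ replaced by $n^c$ and the simple moment generating function replaced by its joint analogue. So the whole argument reduces to computing the first-order Taylor expansion of $M(1-\sqrt{\gamma})$ around $\gamma = 0$ and identifying its leading coefficient with the joint mutual information.

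First I would use the identity $M(t) = \expn_{g^c} e^{(t-1) S_{T,i}}$ at $t = 1 - \sqrt{\gamma}$ to write
\begin{align}
M(1-\sqrt{\gamma}) = \sum_{z,y} \pr_{g^c}(z,y|p)\exp\!\left(-\sqrt{\gamma}\ln\frac{\pr_{g^c}(z,y|p)}{\pr_{i^c}(z,y|p)}\right).
\end{align}
Expanding the exponential as $1 - \sqrt{\gamma}\,\ln(\pr_{g^c}/\pr_{i^c}) + O(\gamma)$ and using the same justification as in Corollary~\ref{cor:dec-simple-informed} (terms with $\pr_{g^c}=0$ drop out because of the prefactor, and when $\pr_{g^c}(z,y|p) > 0$ we also have $\pr_{i^c}(z,y|p) > 0$ so that the log-ratio is bounded and the $O(\gamma)$ tail is uniform in $(z,y)$), I would obtain
\begin{align}
M(1-\sqrt{\gamma}) = 1 - \sqrt{\gamma}\,K\ln 2 + O(\gamma),
\end{align}
where $K = \sum_{z,y}\pr_{g^c}(z,y|p)\log_2\bigl(\pr_{g^c}(z,y|p)/\pr_{i^c}(z,y|p)\bigr)$. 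The key identification is that $\pr_{g^c}(z,y|p) = \pr(Z=z, Y=y|P=p)$ (the tuple contains exactly the colluders, so $Z = \sum_i X_i$ is the pirates' symbol count), while $\pr_{i^c}(z,y|p) = \pr(Z=z|P=p)\pr(Y=y|P=p)$ because an all-innocent tuple's symbols are independent of the pirate output and the innocents' sum has the same Binomial$(c,p)$ distribution as $Z$. Hence $K = I(Z;Y|P=p)$ and therefore $-\ln M(1-\sqrt{\gamma}) = \sqrt{\gamma}\,I(Z;Y|P=p)\ln 2\,(1 + O(\sqrt{\gamma}))$.

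Substituting into the expression for $\ell$ of Theorem~\ref{thm:dec-joint-informed} and using $\ln(n^c/\eps_1) = c\ln n + O(1)$ then yields
\begin{align}
\ell = \frac{(1+\sqrt{\gamma})\,c\ln n}{I(Z;Y|P=p)\ln 2}\,[1+O(\sqrt{\gamma})] = \frac{\log_2 n}{\tfrac{1}{c}I(Z;Y|P=p)}\,[1+O(\sqrt{\gamma})],
\end{align}
which matches the claimed asymptotic, and matches the joint capacity of Section~\ref{sec:cap-joint-informed} to first order. There is no real obstacle here beyond the bookkeeping; the only subtlety (already handled in the simple case) is the uniform control of the Taylor remainder, which follows because the two measures $\pr_{g^c}(\cdot|p)$ and $\pr_{i^c}(\cdot|p)$ share the same support. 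Thus the proof is essentially a one-line modification of the simple case, with the extra factor $c$ coming from $\ln(n^c)$ absorbed into the normalized mutual information $\tfrac{1}{c}I(Z;Y|P=p)$.
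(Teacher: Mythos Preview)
Your proposal is correct and matches exactly what the paper has in mind: the paper does not spell out a separate proof for this proposition but explicitly remarks, just before stating it, that the argument is the same as for Corollary~\ref{cor:dec-simple-informed} with the extra factor $c$ from $\ln(n^c) = c\ln n$ absorbed into $\tfrac{1}{c}I(\cdot)$. Your identification of $K$ with $I(Z;Y|P=p)$ via $\pr_{i^c}(z,y|p) = \pr(Z=z|p)\pr(Y=y|p)$ is the only new ingredient beyond the simple case, and it is handled correctly.
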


Since the asymptotic code length is optimal regardless of $p$, these asymptotics are also optimal when $p$ is optimized to maximize the mutual information in the fully informed setting.

Finally, although it is hard to estimate the scores of mixed tuples with this decoder, just like in~\cite{oosterwijk14} we expect that the joint decoder score for a tuple is roughly equal to the sum of the $c$ individual simple decoder scores. So a tuple of $c$ users consisting of $k$ colluders and $c - k$ innocent users is expected to have a score roughly a factor $k/c$ smaller than the expected score for the all-guilty tuple. So after computing the scores for all tuples of size $c$, we can get rough estimates of how many guilty users are contained in each tuple, and for instance try to find the set $\mathcal{C}'$ of $c$ users that best matches these estimates. There are several options for post-processing that may improve the accuracy of using this joint decoder, which are left for future work.

%-------------------------------------------------------%

\subsubsection{Partially informed}
\label{sec:dec-joint-part}

As mentioned in Proposition~\ref{prop:dec-joint-informed}, the code length is asymptotically optimal regardless of $p$, so the code length in the partially uninformed setting is also asymptotically optimal. Asymptotics on $\ell$ can thus be obtained by combining Proposition~\ref{prop:dec-joint-informed} with the results of Section~\ref{sec:cap-joint-part}. 

%-------------------------------------------------------%

\subsubsection{Uninformed}
\label{sec:dec-joint-uninformed}

Note that if the above joint decoder turns out to work well, then we can again plug in the interleaving attack to get something that might just work well against arbitrary attacks. While we cannot prove that this joint decoder is optimal, we can already see what the score function would be, and conjecture that it works against arbitrary attacks.

\begin{conjecture}
The joint log-likelihood decoder against the interleaving attack, with the score function $g$ defined by
\begin{align}
g(z, y, p) = \begin{cases}
\ln(1 - \frac{z}{c}) - \ln(1 - p) \quad & (y = 0) \\
\ln(\frac{z}{c}) - \ln(p) & (y = 1)
\end{cases}
\end{align}
works against arbitrary attacks and asymptotically achieves the joint capacity of the uninformed fingerprinting game.
\end{conjecture}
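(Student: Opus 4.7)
The plan is to mirror the proof of Theorem~\ref{thm:dec-joint-informed}, but applied to a score function designed against the interleaving attack while the true attack $\vec{\theta}$ is arbitrary. First I would verify that the conjectured $g$ is indeed the log-likelihood ratio $\ln(\pr_{g^c}^{\text{int}}(z,y|p)/\pr_{i^c}^{\text{int}}(z,y|p))$: under the all-guilty hypothesis the output is drawn from $\pr(y|z,\thint)$, under all-innocent from the marginal $\pr(y|p,\thint)$, the binomial factor $\binom{c}{z}p^z(1-p)^{c-z}$ common to both laws cancels, and under interleaving the conditional and marginal reduce to $z/c$ and $p$, giving exactly the stated $g$.

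Second, I would analyse the moment generating function under the true all-innocent/all-guilty laws. For the innocent bound a direct calculation shows
\begin{align}
\expn_{i^c,\vec{\theta}}\!\left[\tfrac{\pr_{g^c}^{\text{int}}(Z,Y|p)}{\pr_{i^c}^{\text{int}}(Z,Y|p)}\right] = 1,
\end{align}
regardless of $\vec{\theta}$, because $\sum_z \binom{c}{z}p^z(1-p)^{c-z}(z/c)=p$ and similarly for $1-z/c$; so the Markov bound with $\alpha=1$ gives $\pr(S>\eta)\le \eps_1/n^c$ universally. For the guilty bound, a Taylor expansion around $t=1$ yields
\begin{align}
M(1{-}\sqrt{\gamma}) = 1 - \sqrt{\gamma}\,\expn_{g^c,\vec{\theta}}\!\left[\ln\tfrac{\pr_{g^c}^{\text{int}}(Z,Y|p)}{\pr_{i^c}^{\text{int}}(Z,Y|p)}\right] + O(\gamma).
\end{align}
Splitting this cross term as $D(\pr_{g^c,\vec{\theta}}\|\pr_{i^c,\vec{\theta}})$ plus a correction comparing the true and interleaving marginals, the first piece equals $\tfrac{1}{c}I(Z;Y|P{=}p,\vec{\theta})\ln 2$. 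The heart of the proof is to show that the correction is non-negative (or vanishes) after averaging $p$ over the arcsine distribution $F^*$, so that the resulting $\ell$ asymptotically matches $2c^2 \ln n$ for every attack.

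Third, to upgrade to universality I would invoke the Abbe--Zheng argument already referenced in the paper: because $(\thint,F^*)$ forms the saddle-point of the joint uninformed game (Section~\ref{sec:cap-joint-uninformed}) and the one-sidedness condition holds at the saddle-point for large $c$ (in analogy with the simple-decoder case handled by Oosterwijk et al.\ and the uninformed simple analysis in Section~\ref{sec:dec-simple-uninformed}), the log-likelihood decoder built against the asymptotic worst-case attack is automatically a capacity-achieving universal decoder. Combined with the MGF bound, this gives $\ell\sim 2c^2\ln n$ matching the capacity of Section~\ref{sec:cap-joint-uninformed}.

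The main obstacle is the mixed-tuple issue flagged after Theorem~\ref{thm:dec-joint-informed}: the MGF bookkeeping controls only all-innocent and all-guilty tuples, while a tuple of $k$ colluders and $c-k$ innocents can in principle also cross $\eta$, leaving the identification of $\mathcal{C}$ ambiguous. A complete proof must either add a post-processing step that selects the unique top-scoring $c$-subset, or refine the MGF analysis parametrically in $k$ to show that the expected tuple score scales monotonically with $k$ and peaks at $k=c$, so that with probability $1-\eps_1-\eps_2$ the unique maximal tuple coincides with $\mathcal{C}$. Handling these mixed tuples — rather than the asymptotic MGF calculation itself — is precisely what keeps the statement a conjecture.
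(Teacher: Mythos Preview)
The paper offers no proof here: the statement is explicitly a conjecture, and the sentence immediately after it says a further study of this decoder ``is left as an open problem.'' There is therefore no paper proof to compare against; one can only assess your plan on its own merits.

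The easy parts of your outline are correct. The derivation of $g$ as the interleaving log-likelihood ratio is fine, and your innocent-side argument --- that $\expn_{i^c,\vec{\theta}}\bigl[e^{S_{T,i}}\bigr]=1$ for \emph{every} marking-assumption attack, because for an all-innocent tuple $Z$ and $Y$ are independent and $\expn_Z[\pr(y\mid Z,\thint)]=\pr(y\mid p,\thint)$ --- is genuinely right and delivers a universal false-positive bound with $\alpha=1$. You also correctly flag the mixed-tuple ambiguity inherited from Theorem~\ref{thm:dec-joint-informed}. But the steps you call ``the heart of the proof'' are not carried out: you do not actually show that the correction term in your guilty-side decomposition is non-negative after averaging over $p\sim F^*$ for every admissible $\vec{\theta}$, nor do you verify the Abbe--Zheng one-sidedness condition for the \emph{joint} game (the paper only discusses it in the simple setting, and the extension is not automatic). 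Those two unexecuted steps are precisely the substance of the conjecture, which is why the paper leaves it open. (Minor slip: $D(\pr_{g^c,\vec{\theta}}\|\pr_{i^c,\vec{\theta}})=I(Z;Y\mid P{=}p)\ln 2$ without the $1/c$; the factor $1/c$ only enters later via $\ln(n^c)=c\ln n$ in the code-length formula.)
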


A further study of this universal joint decoder is left as an open problem.

%WWWWWWWWWWWWWWWWWWWWWWWWWWWWWWWWWWWWWWWWWWWWWWWWWWWWWWWW
%WWWWWWWWWWWWWWWWWWWWWWWWWWWWWWWWWWWWWWWWWWWWWWWWWWWWWWWW

\section{Discussion}
\label{sec:discussion}

Let us now briefly discuss the results from Sections~\ref{sec:cap} and \ref{sec:dec}, their consequences, and some directions for future work.

\paragraph{Informed simple decoding}

For the setting of simple decoders, we derived explicit asymptotics on the informed capacities for various attacks, which often scale as $\Theta(c^{-1})$. We further showed that log-likelihood scores provably match these bounds for large $n$, regardless of $\vec{\theta}$ and $p$. Because these decoders are optimal for any value of $p$, they are also optimal in the partially informed setting, where different values of $p$ are used. If the encoder uses the arcsine distribution to generate biases, we showed that these capacities generally seem to scale as $\Theta(c^{-3/2})$, which is roughly `halfway' between the fully informed and uninformed capacities.

\paragraph{Uninformed simple decoding}

Although log-likelihood decoders have already been studied before in the context of fingerprinting, the main drawback was always that to use these decoders, you would either have to fill in (and know) the exact pirate strategy, or compute the worst-case attack explicitly. So if you are in the simple uninformed setting where you don't know the pirate strategy and where the worst-case attack is not given by a nice closed-form expression~\cite[Fig.~4b]{huang12}, how can you construct such decoders for large $c$? The trick seems to be to just fill in the \textit{asymptotic} worst-case attack, which Huang and Moulin showed is the interleaving attack~\cite{huang12}, and which is much simpler to analyze. After previously suggesting this idea to Oosterwijk et al., we now used the same trick here to obtain two other capacity-achieving score functions using two different methods (but each time filling in the interleaving attack). So in total we now have three different methods to obtain (closed-form) capacity-achieving decoders in the uninformed setting:
\begin{itemize}
  \item Using Lagrange-multipliers, Oosterwijk et al.~\cite{oosterwijk13} obtained:
  \begin{align}
h(x,y,p) = \begin{cases} 
+\frac{p}{1 - p} \qquad & x = y = 0 \\
-1 & x \neq y \\
+\frac{1 - p}{p} & x = y = 1
\end{cases}
  \end{align}
  \item Using Neyman-Pearson-based log-likelihood scores, we obtained:
  \begin{align}
g(x,y,p) = \begin{cases} 
\ln\left(1 + \frac{p}{c(1-p)}\right) & x = y = 0 \\
\ln\left(1 - \frac{1}{c}\right) & x \neq y \\
\ln\left(1 + \frac{1 - p}{cp}\right) & x = y = 1
\end{cases}
  \end{align}
  \item Using a Bayesian approximation of the empirical mutual information decoder of Moulin~\cite{moulin08}, we obtained:
\begin{align}
m(x,y,p) = \begin{cases}
\ln\left(1 + \frac{p}{n(1 - p)}\right) & x = y = 0 \\
\ln\left(1 - \frac{1}{n}\right) & x \neq y \\ 
\ln\left(1 + \frac{1 - p}{np}\right) & x = y = 1
\end{cases}
\end{align}
\end{itemize}
For $o(1) < p < 1 - o(1)$ and large $c, n$, these score functions are equivalent up to a scaling factor:
\begin{align}
h(x,y,p) \sim c \cdot g(x,y,p) \sim n \cdot m(x,y,p),
\end{align}
and therefore all three are asymptotically optimal. So there may be many different roads that lead to Rome, but they all seem to have one thing in common: to build a universal decoder that works against arbitrary attacks, one should build a decoder that works against the asymptotic worst-case pirate attack, the interleaving attack. And if it does work against this attack, then it probably works against any other attack as well.

\paragraph{Joint decoding}

Although deriving the joint informed capacities is much easier than deriving the simple informed capacities, actually building decoders that provably match these bounds is a different matter. We conjectured that the same log-likelihood scores achieve capacity when a suitable accusation algorithm is used, and we conjectured that the log-likelihood score built against the interleaving attack achieves the uninformed joint capacity, but we cannot prove any of these statements beyond reasonable doubt. For now this is left as an open problem.

\paragraph{Group testing}

Since the all-$1$ attack is equivalent to group testing, some of the results we obtained also apply to group testing. The joint capacity was already known~\cite{sebo85}, but to the best of our knowledge both the simple capacity~(Corollary~\ref{cor:cap-simple-informed-group}) and a simple decoder matching this simple capacity~(Corollary~\ref{cor:dec-simple-informed-group}) were not yet known before. Attempts have been made to build efficient simple decoders with a code length not much longer than the joint capacity~\cite{chan12}, but these do not match the simple capacity. Future work will include computing the capacities and building decoders for various noisy group testing models, where the marking assumption may not apply. 

\paragraph{Dynamic fingerprinting}

Although this paper focused on applications to the `static' fingerprinting game, the construction of~\cite{laarhoven13tit} can trivially be applied to the decoders in this paper as well to build efficient dynamic fingerprinting schemes. Although the asymptotics for the code length in this dynamic construction are the same, (i) the order terms are significantly smaller in the dynamic game, (ii) one does not need the assumption that the pirate strategy is colluder-symmetric, and (iii) one does not necessarily need to know (a good estimate of) $c$ in advance~\cite[Section~V]{laarhoven13tit}. An important open problem remains to determine the dynamic uninformed fingerprinting capacity, which may prove or disprove that the construction of \cite{laarhoven13tit} is optimal.

\paragraph{Further generalizations}

While this paper already aims to provide a rather complete set of guidelines on what to do in the various different fingerprinting games (with different amounts of side-information, and different computational assumptions on the decoder), there are some further generalizations that were not considered here due to lack of space. We mention two in particular:
\begin{itemize}
  \item \textbf{Larger alphabets}: In this work we focused on the binary case of $q = 2$ different symbols, but it may be advantageous to work with larger alphabet sizes $q > 2$, since the code length decreases linearly with $q$. For the results about decoders we did not really use that we were working with a binary alphabet, so it seems a straightforward exercise to prove that the $q$-ary versions of the log-likelihood decoders also achieve capacity. A harder problem seems to be to actually compute these capacities in the various informed settings, since the maximization problem then transforms from a one-dimensional optimization problem to a $(q - 1)$-dimensional optimization problem.
  \item \textbf{Tuple decoding}: As in~\cite{oosterwijk14}, we can consider a setting in between the simple and joint decoding settings, where decisions to accuse are made based on looking at tuples of users of size at most $t$. Tuple decoding may offer a trade-off between the high complexity, low code length of a joint decoder and the low complexity, higher code length of a simple decoder, and so it may be useful to know how the capacities scale in the region $1 < t < c$.
\end{itemize}

%WWWWWWWWWWWWWWWWWWWWWWWWWWWWWWWWWWWWWWWWWWWWWWWWWWWWWWWW
%WWWWWWWWWWWWWWWWWWWWWWWWWWWWWWWWWWWWWWWWWWWWWWWWWWWWWWWW

\section{Acknowledgments} 
The author is very grateful to Pierre Moulin for his insightful comments and suggestions during the author's visit to Urbana-Champaign that inspired work on this paper. The author would also like to thank Teddy Furon for pointing out the connection between decoders designed against the interleaving attack and the results of Abbe and Zheng~\cite{abbe10}, and for finding some mistakes in a preliminary version of this manuscript. Finally, the author thanks Jeroen Doumen, Jan-Jaap Oosterwijk, Boris \v{S}kori\'{c}, and Benne de Weger for valuable discussions and comments.

%WWWWWWWWWWWWWWWWWWWWWWWWWWWWWWWWWWWWWWWWWWWWWWWWWWWWWWWW
%WWWWWWWWWWWWWWWWWWWWWWWWWWWWWWWWWWWWWWWWWWWWWWWWWWWWWWWW

\end{document}